\newif\ifdraft
\newif\iflong
   \author{
Ilan Shomorony
  \and
A. Salman Avestimehr} 
\author{
   \IEEEauthorblockN{Ilan Shomorony}
   \IEEEauthorblockA{Cornell University \\ is256@cornell.edu \vspace{-20mm}
   }
  \and
   \IEEEauthorblockN{A. Salman Avestimehr}
   \IEEEauthorblockA{University of Southern California \\ avestimehr@ee.usc.edu} 
  }
\title{A Generalized Cut-Set Bound for Deterministic Multi-Flow Networks and its Applications}
\newcommand{\ep}{\epsilon}
\newcommand{\A}{{\mathcal A}}
\newcommand{\B}{{\mathcal B}}
\newcommand{\C}{{\mathcal C}}
\newcommand{\D}{{\mathcal D}}
\newcommand{\V}{{\mathcal V}}
\newcommand{\E}{{\mathcal E}}
\newcommand{\I}{{\mathcal I}}
\newcommand{\N}{{\mathcal N}}
\newcommand{\M}{{\mathcal M}}
\newcommand{\IN}{{\mathbb N}}
\newcommand{\R}{{\mathbb R}}
\renewcommand{\O}{{\mathcal O}}
\newcommand{\T}{{\mathcal T}}
\newcommand{\U}{{\mathcal U}}
\newcommand{\W}{{\mathcal W}}
\renewcommand{\S}{{\mathcal S}}
\newcommand{\K}{{\mathcal K}}
\newcommand{\X}{{\mathcal X}}
\newcommand{\Y}{{\mathcal Y}}
\newcommand{\F}{{\mathds F}}
\newcommand{\kkk}{K \hspace{-1mm} \times \hspace{-1mm} K \hspace{-1mm} \times \hspace{-1mm} K}
\newcommand{\kkkl}[1]{K \hspace{-#1mm} \times \hspace{-#1mm} K \hspace{-#1mm} \times \hspace{-#1mm} K}
\newcommand{\rank}{{\mathsf{rank}} \,}
\newcommand{\MC}[3]{#1\leftrightarrow #2 \leftrightarrow #3}
\newcommand{\defi}{\triangleq}
\newcommand{\cut}{\Omega}
\newcommand{\cutb}{\Theta}
\newcommand{\til}{\tilde}
\newcounter{constcount}
\newcounter{numcount}
\newcommand{\eqnum}{\stackrel{(\roman{numcount})}{=}\stepcounter{numcount}}
\newcommand{\leqnum}{\stackrel{(\roman{numcount})}{\leq\;}\stepcounter{numcount}}
\newcommand{\cnt}{($\roman{numcount}$)\;\stepcounter{numcount}}
\newcommand{\rescnt}{\setcounter{numcount}{1}}
\renewcommand{\ge}{\gamma}
\newcommand{\eref}[1]{(\ref{#1})}
\newcommand{\tref}[1]{Theorem~\ref{#1}}
\newcommand{\fref}[1]{Fig.~\ref{#1}}
\newcounter{thmcnt}
  \let\Oldsection\section
\renewcommand{\section}{\stepcounter{thmcnt}\Oldsection}
\newtheorem{theorem}{Theorem}
\newtheorem{lemma}{Lemma}
\newtheorem{definition}{Definition}
\newtheorem{claim}{Claim}
\newtheorem{cor}{Corollary}
\newtheorem{remark}{Remark}
\newenvironment{lemmarep}[1]{\noindent {\bf Lemma #1.}\begin{it}}{\end{it}}
\newenvironment{claimrep}[1]{\noindent {\bf Claim #1.}\begin{it}}{\end{it}}
\newenvironment{correp}[1]{\noindent {\bf Corollary #1.}\begin{it}}{\end{it}}
\newcounter{examplecounter}
\newenvironment{example}
{
\stepcounter{examplecounter} {\vspace{2mm} \noindent \bf Example \arabic{examplecounter}.} \begin{it} \rm
}
{
\end{it} \vspace{0mm}}
\newcommand{\aln}[1]{\begin{align*}#1\end{align*}}
\newcommand{\al}[1]{\begin{align}#1\end{align}}
\newcommand{\non}{\nonumber \\}
\newcommand{\floor}[1]{\left\lfloor #1 \right\rfloor}
\begin{document}

\maketitle

%

\begin{abstract}
We present a new outer bound for the sum capacity of general multi-unicast deterministic networks.
Intuitively, this bound can be understood as applying the cut-set bound to \emph{concatenated} copies of the original network with a special restriction on the allowed transmit signal distributions.
We first study applications to finite-field networks, where we obtain a general outer-bound expression in terms of ranks of the transfer matrices.
We then show that, even though our outer bound is for deterministic networks, a result from \cite{deterministickkk} relating the capacity of AWGN $\kkkl{0.6}$ networks and the capacity of a deterministic counterpart allows us to establish an outer bound to the DoF of $\kkk$ wireless networks with general connectivity.
This bound is tight in the case of the ``adjacent-cell interference'' topology, and yields graph-theoretic necessary and sufficient conditions for $K$ DoF to be achievable in general topologies.
\end{abstract}

\section{Introduction}
\label{sec:intro}

Characterizing network capacity is one of the central problems in network information theory.
While this problem is in general unsolved, there has been considerable success in several different research fronts. 
For single-flow wireline networks, for example, the capacity has been characterized first in the single-unicast scenario as a result of the max-flow min-cut theorem \cite{FF56} and then in the multicast scenario \cite{ACLY00} using network coding.
Later, in \cite{ADT11}, the max-flow min-cut theorem was generalized for a class of linear deterministic networks,
which motivated the characterization of the
 capacity of single-flow wireless networks  to within a constant gap 
\cite{ADT11}. 

In the case of multi-flow networks, i.e., when there are multiple data sources, 
 most of the work has focused on single-hop interference channels, for which the capacity has been determined or approximated to within a constant gap 
 in some two-user cases 
\cite{ElGamalCosta, ETW,KhandaniWIC,GerhardWIC}, 
and the degrees of freedom (DoF) have been characterized in the general $K$-user case
\cite{CadambeJafar,MotahariRealInterference}.
%
More recently, efforts have been made towards understanding more general multi-hop multi-flow networks,
such as two-unicast networks \cite{ShenviDey,MohajerZZ,xx,dof2unicastfull,GouDoFNonLayered,WangTwoUnicast} and $K$-unicast two-hop networks \cite{dofkkk}.

A classic tool in the study of network capacity is the cut-set bound \cite{elgamalbook}.
This capacity outer bound is attractive due to its generality -- it applies to arbitrary memoryless networks -- and the fact that it is a single-letter expression.
Furthermore, 
it 
is known to be tight in multicast wireline and 
linear 
deterministic networks and within a constant gap of capacity in AWGN relay networks \cite{ADT11}.
For multi-flow networks, however, the cut-set bound is easily seen to be arbitrarily loose. 
Aside from the wireline case, where improvements over the cut-set bound are known 
\cite{KamathGNS,HKL06,ThakhorFunctional}, 
most ``non-cut-set'' bounds are tied to specific settings 
(e.g., \cite{ETW,dof2unicastfull}), 
and few general techniques are known.

In this paper, we propose a new generalization to the cut-set bound for deterministic $K$-unicast networks.
The intuition behind our bound comes from noticing that a coding scheme for a $K$-unicast network $\N$, when applied to a \emph{concatenation} of multiple copies of $\N$,
can be used to achieve the original rates while inducing essentially the same distribution on the transmit signals of each copy of $\N$.
Hence, one should be able to apply the cut-set bound to the concatenated network with a restriction on the possible transmit signal distributions.
As we show, one can in fact require the transmit signals distribution on each copy to be \emph{the same}, which can significantly reduce the values that the mutual information terms attain.
%
%

In terms of applications, we
first consider linear finite-field networks.
These networks have recently received attention
as they allow the deterministic modeling of wireless networks and can provide insights about their AWGN counterparts. 
%
Similar to the cut-set bound in \cite{ADT11}, 
we obtain a general outer-bound expression in terms of ranks of transfer matrices.
We then focus on $\kkk$ topologies.
Besides being a canonical example of $K$-unicast multi-hop networks, as recently shown in \cite{dofkkk}, they reveal 
the significant role relays can play in interference management.
For binary $\kkk$ networks, our rank-based bound yields necessary and sufficient conditions for rate $K$ to be achieved.
Furthermore, 
using a result from \cite{deterministickkk} that relates the capacity of 
$\kkk$ networks under the AWGN and the truncated deterministic models,
we obtain a bound on the DoF of $\kkk$ AWGN networks with general connectivity.
This bound is tight in the case of the $\kkkl{1}$ topology with ``adjacent-cell interference'' and allows us to establish 
graph-theoretic 
necessary and sufficient conditions for $K$ DoF to be achievable in general topologies.
\iflong
\else
\fi

\iflong
\section{Problem Setup} \label{setupsec}
\else
\section{A Generalization of the Cut-Set Bound}
\fi

We consider a general $K$-unicast memoryless network $\N$, illustrated in Fig.~\ref{netfig}.
The network consists of a set of nodes $\V$, out of which we have $K$ sources $\S = \{s_1,...,s_K\}$ and $K$ corresponding destinations $\D = \{d_1,...,d_K\}$.
\begin{figure}[b] 
     \centering
       \includegraphics[trim=0 0 0 7mm, clip=false,width=0.8\linewidth]{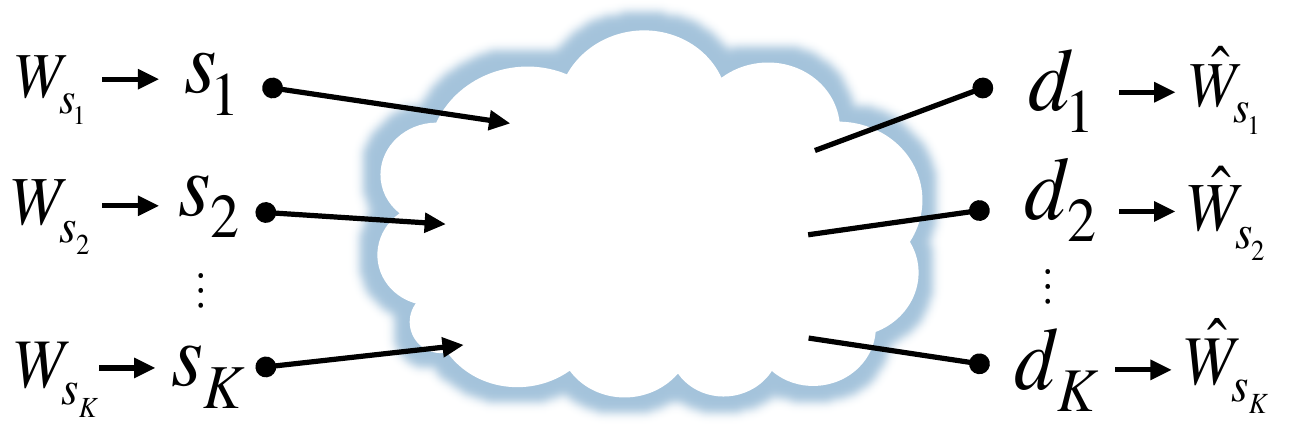} \caption{A general $K$-unicast network $\N$ \label{netfig}}
\end{figure}
At each time $t=1,2,\ldots$, each node $v \in \V$ transmits a symbol (or signal) $X_v[t] \in \X_v$ and each node $v \in \V$ receives a signal $Y_v[t] \in \Y_v$, for arbitrary alphabets $\X_v$ and $\Y_v$.
In general, for variables $z_v$ indexed by $v \in \V$, we will let $z_\A = (z_v : v \in \A )$, and for $m \geq 1$, $z_v^m = (z_v[1],...,z_v[m])$.
Then, $Y_\V[t]$, the signals received at time $t$, are determined by a function $F$ as $Y_\V[t] = F(X_\V[t])$ if $\N$ is a deterministic network or by a 
conditional distribution 
$p(y_\V|x_\V)$ if $\N$ is a stochastic (memoryless) network.
To simplify the exposition, we assume throughout that source nodes do not receive any signals 
(i.e., $\Y_{s_i} = \emptyset$)
and destination nodes do not transmit any signal 
(i.e., $\X_{d_i} = \emptyset$).


\iflong

\begin{definition} \label{codedef}
A coding scheme $\C$ with block length $n \in \IN$ and rate tuple $(R_1,\ldots,R_{K}) \in \R^{K}_+$ for a $K$-unicast network consists of
\begin{enumerate}[1. ]
\item An encoding function $f_{s_i} : \{1,\ldots,2^{n R_i}\} \to \X_{s_i}^{n}$ for each source $s_i$, $i=1,\ldots,K$
\item Relaying functions $r_{v}^{(t)} : \Y_v^{t-1} \to \X_v$, for $t=1,\ldots,n$, for each node $v \in \V \setminus (\S \cup \D)$
\item A decoding function $g_{d_i} : \Y_{d_i}^n \to \{1,\ldots,2^{n R_i}\}$ for each destination $d_i$, $i=1,\ldots,K$.
\end{enumerate}
\end{definition}

If the network imposes an average power constraint on the transmit signals, then the encoding and relaying functions above must additionally satisfy such a constraint.

\begin{definition}
The error probability of a coding scheme $\C$ (as defined in Definition \ref{codedef}), is given by
\aln{
P_{\rm error}(\C) = \Pr \left[ \bigcup_{i=1}^{K} \{ W_{s_i} \ne g_i(Y_{d_i}[1],\ldots,Y_{d_i}[n]) \} \right],
}
where we assume that each $W_{s_i}$ is chosen independently and uniformly at random from $\{1,\ldots,2^{n R_i}\}$, that source $s_i$ transmits $f_{s_i}(W_{s_i})$ over the $n$ time-steps, and node $v \in \V\setminus (\S\cup\D)$ transmits $r_v^{(t)}(Y_{v}^{t-1})$ at time $t=1,\ldots,n$, for $i=1,\ldots,K$.
\end{definition}

%
%

%
%
\begin{definition} \label{achievedef}
A rate tuple $(R_1,\ldots,R_{K})$ is said to be achievable for a $K$-unicast network if, for any $\epsilon > 0$, there exists a coding scheme $\C_n$  with rate tuple $(R_1,\ldots,R_{K})$ and some block length $n$, for which $P_{\rm error}(\C_n) \leq \epsilon$.
\end{definition}

\begin{definition}
The capacity region $C \subset \R_+^K$ of a $K$-unicast network is the closure of the set of achievable rate tuples, and the sum capacity is defined as
\aln{
C_{\Sigma} = \max_{(R_1,\ldots,R_K) \in C} \sum_{i=1}^K R_i.
}
\end{definition}

In the case of networks with an average transmit power constraint $P$, we write $C(P)$ and $C_\Sigma(P)$ for the capacity and sum capacity, and we can define the sum degrees of freedom as follows:

\begin{definition}
The sum degrees of freedom of a $K$-unicast network with transmit power constraint $P$ are defined as
\aln{
D_{\Sigma} = \lim_{P\to \infty} \frac{C_\Sigma(P)}{\tfrac12 \log P}.
}
\end{definition}

\fi  

\iflong 
\section{A Generalization of the Cut-Set Bound} 
\fi

For a $K$-unicast memoryless network, the classical cut-set bound states that, if $(R_1,...,R_K) \in C$, then there exists a distribution $p(x_\V)$ on the transmit signals of all nodes in $\V$ 
(possibly with a power constraint in the case of AWGN networks) such that
\al{
\sum_{i=1}^K R_i \leq \min_{\substack{\cut \subset \V : \\ \S \subseteq \cut \subseteq \V-\D}} I(X_\cut; Y_{\cut^c}| X_{\cut^c}). \label{cutset}
}
\iflong
This outer bound is obtained by taking a coding scheme $\C_n$ out of a sequence that achieves a rate tuple $(R_1,...,R_K)$ on $\N$ and showing that it induces a probability distribution $p(x_\V)$ on the transmit signals such that, for any cut $\cut$, the sum rate $\sum_{i=1}^K R_i$ is upper-bounded by $I(X_\cut;Y_{\cut^c}|X_{\cut^c})$ plus the Fano error term.
\fi
We generalize this bound in the case of deterministic networks as follows:

\begin{theorem} \label{genthm}
Consider a $K$-unicast deterministic network $\N$ with node set $\V$.
If a rate tuple $(R_1,...,R_K)$ is achievable on $\N$, then there exists a joint distribution $p(x_\V)$ on the transmit signals of the nodes in $\V$, such that
\al{
\sum_{i=1}^K R_i \leq \sum_{j=1}^\ell I(X_{\cut_j}; Y_{\cut_j^c}| X_{\cut_j^c}, Y_{\cut_{j-1}^c}),
\label{cutsetgen}
}
for all choices of $\ell$ node subsets $\cut_1,...,\cut_\ell$ such that $\V = \cut_0 \supseteq \cut_1 \supseteq \cut_2 \supseteq ... \supseteq \cut_\ell \supseteq \cut_{\ell+1} = \emptyset$, 
and $d_i \in \cut_j \Leftrightarrow s_i \in \cut_{j+1}$ for $j=0,1,...,\ell$, $i=1,...,K$ and any $\ell \geq 1$. 
\end{theorem}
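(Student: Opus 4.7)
The plan is to apply the standard cut-set bound to a \emph{concatenation} $\tilde\N$ of $\ell$ copies of $\N$, denoted $\N^{(1)},\dots,\N^{(\ell)}$, in which the destination $d_i^{(j)}$ of copy $j$ is identified with the source $s_i^{(j+1)}$ of copy $j+1$ as a single merged node; in $\tilde\N$, session $i$ thus has source $s_i^{(1)}$ and destination $d_i^{(\ell)}$. Starting from any code $\C_n$ achieving $(R_1,\dots,R_K)$ on $\N$ with error $\epsilon_n\to 0$, I would run $\C_n$ inside each copy and have every merged node decode with the $\C_n$ decoder and re-encode for the next copy; by the vanishing-error property this achieves rate $R_i$ from $s_i^{(1)}$ to $d_i^{(\ell)}$ on $\tilde\N$, and the joint distribution of the transmit signals $X_\V^{(j),n}$ in each copy converges to the same distribution induced by $\C_n$ on $\N$ (this is precisely the ``same distribution on each copy'' restriction mentioned in the introduction).

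Next, I would show that the subset $\tilde\cut$ of $\tilde\N$ defined by $\tilde\cut\cap\V^{(j)}=\cut_j$ is a valid cut separating first-copy sources from last-copy destinations: the hypotheses at $j=0$ and $j=\ell$ give $\S\subseteq\cut_1$ and $\D\cap\cut_\ell=\emptyset$, while the matching conditions $d_i\in\cut_j\Leftrightarrow s_i\in\cut_{j+1}$ ensure that each merged node lands in $\tilde\cut$ or $\tilde\cut^c$ consistently from the standpoint of both copies to which it belongs. Applying the classical cut-set bound to $\tilde\N$ along $\tilde\cut$ yields
\[
\sum_{i=1}^K R_i \;\leq\; \tfrac{1}{n}\, I\!\left(\tilde X_{\tilde\cut}^n;\, \tilde Y_{\tilde\cut^c}^n \,\big|\, \tilde X_{\tilde\cut^c}^n\right) + o(1).
\]
The right-hand side is then decomposed copy-by-copy via the chain rule. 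Since $\N$ is deterministic, $Y^{(k),n}$ is a function of $X_\V^{(k),n}$, so any previously-introduced $Y^{(k),n}$ becomes redundant once the full $X_\V^{(k),n}$ appears in the conditioning and can be dropped. What survives in the $j$-th term, besides $X_{\cut_j^c}^{(j),n}$, is the conditioning on $Y_{\cut_{j-1}^c}^{(j-1),n}$, whose components at the merged nodes are exactly the inputs of copy-$j$ sources lying in $\cut_j^c$ (by the matching condition). Finally, introducing a time-sharing random variable and using that all copies share a common marginal converts each per-copy $n$-letter term into the single-letter expression $I(X_{\cut_j};Y_{\cut_j^c}\mid X_{\cut_j^c},Y_{\cut_{j-1}^c})$ evaluated under a common $p(x_\V)$.

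The hard part will be making the chain-rule collapse rigorous: choosing the order in which to introduce the $X$ and $Y$ variables across the $\ell$ copies so that only the immediately-preceding $Y_{\cut_{j-1}^c}^{(j-1),n}$ survives in the $j$-th term (and not signals from earlier or later copies), and leveraging the deterministic structure systematically to eliminate all other $Y$ terms. A secondary subtlety is formalizing the ``identical marginal across copies'' property: decoded messages at merged nodes are only approximately equal to the true ones, so one must carefully handle the error terms that accumulate when $\C_n$ is run across $\ell$ concatenated copies before taking $n\to\infty$ with $\ell$ fixed.
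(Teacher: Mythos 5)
Your proposal follows the \emph{motivating} argument that the paper presents before \tref{genthm} --- apply the classical cut-set bound to the concatenation $\N^\ell$ --- but this is precisely the route the paper examines and then abandons, because it does not yield the stated theorem. The gap is in your final step: the cut-set bound on $\N^\ell$ only guarantees the existence of \emph{some} joint distribution $p(x_{\V_1},\dots,x_{\V_\ell})$ on the $\ell$ copies, under which the $j$-th term is $I(X_{\cut_j}^{(j)};Y_{\cut_j^c}^{(j)}\mid X_{\cut_j^c}^{(j)},Y_{\cut_{j-1}^c}^{(j-1)})$ with the conditioning output $Y_{\cut_{j-1}^c}^{(j-1)}$ generated by the \emph{previous copy's} inputs $X_\V^{(j-1)}$. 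The theorem, by contrast, asserts a single distribution $p(x_\V)$ in which $Y_{\cut_{j-1}^c}$ and $Y_{\cut_j^c}$ are produced by the \emph{same} realization of $X_\V$; this is equivalent to forcing $X_{\V_1}=\cdots=X_{\V_\ell}$ with probability one. Equality of the per-copy \emph{marginals} (which is all you claim) is not enough: as the paper shows around \eref{simp2eqb}, if one is allowed to optimize the joint distribution across copies, the maximum is attained by product distributions and the negative term $-I(Y_{\cut_{j-1}^c};Y_{\cut_j^c\setminus\cut_{j-1}^c}\mid X_{\cut_j^c})$ vanishes, collapsing the bound to essentially two independent applications of the ordinary cut-set bound (e.g., it gives only $C_\Sigma\le 2$ for the Z-channel example). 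To rescue your argument you would need the copies to be \emph{identical} signal-by-signal, and the concatenated code only gives this with high probability and with a one-block time offset between copies; turning ``equal w.h.p.'' into exact equality inside single-letter mutual-information terms is not a routine continuity argument (the theorem covers arbitrary, possibly continuous, alphabets), and you have relegated exactly this difficulty to a ``secondary subtlety.''

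The paper's actual proof sidesteps concatenation entirely and works on the original network $\N$: starting from $nR_\Sigma\le I(W_\S;Y_{\cut_\ell^c}^n)+n\ep_n$, it splits the mutual information by the chain rule according to the nested cuts, kills the cross terms $I(W_{\S\setminus\cut_{j+1}};Y^n_{\cut_{j+1}^c\cap\cut_j}\mid Y_{\cut_j^c}^n,\dots)$ by Fano's inequality (using that each $d_i$ with $s_i\notin\cut_{j+1}$ lies in $\cut_j^c$ and must individually decode $W_i$ from $Y_{\cut_j^c}^n$ --- this is where the multi-unicast structure enters), and single-letterizes each surviving term using that $X_{\cut_{j+1}^c}[t]$ is a function of $(Y_{\cut_{j+1}^c}^{t-1},W_{\S\setminus\cut_{j+1}})$ and that $Y_\V[t]$ is a deterministic function of $X_\V[t]$. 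Because everything lives on one network, all the $Y_{\cut_j^c}[t]$ are automatically generated by the single $X_\V[t]$, which is exactly the coupling your concatenation argument fails to deliver.
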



\begin{remark}
If each $\Y_v$ is a discrete set, since the network is deterministic, the right-hand side of \eref{cutsetgen} reduces to $\sum_{j=1}^\ell H(Y_{\cut_j^c}| X_{\cut_j^c}, Y_{\cut_{j-1}^c})$.
\end{remark}

\begin{remark}
The cut-set bound in \eref{cutset} corresponds to $\ell = 1$.
\end{remark}

\begin{remark}
If the network imposes a power constraint on the transmit signals, \tref{genthm} holds for a distribution $p(x_\V)$ whose covariance matrix satisfies such a constraint.
\end{remark}

\begin{remark}
Both \eref{cutset} and \eref{cutsetgen} can be used to bound the sum of a subset of the rates by treating the remaining sources and destinations as regular nodes.
\end{remark}

\begin{remark}
In the case of wireline networks, the bound in \tref{genthm} recovers and provides an alternative interpretation to the Generalized Network Sharing (GNS) bound \cite{networksharing,KamathGNS}.
This is demonstrated in Section \ref{gnssec}.
\end{remark}

The intuition behind this bound comes from noticing that a coding scheme $\C$ designed for a network $\N$ can also be applied on a \emph{concatenation} of $\ell$ copies of $\N$, or $\N^\ell$, illustrated in Fig.~\ref{n2fig} for $\ell = 2$, obtained by identifying each destination of copies $1,2,...,\ell-1$ with the corresponding source on the next copy.
\begin{figure}[t] 
     \centering
       \includegraphics[trim=0 0 0 2mm, clip=false, width=\linewidth]{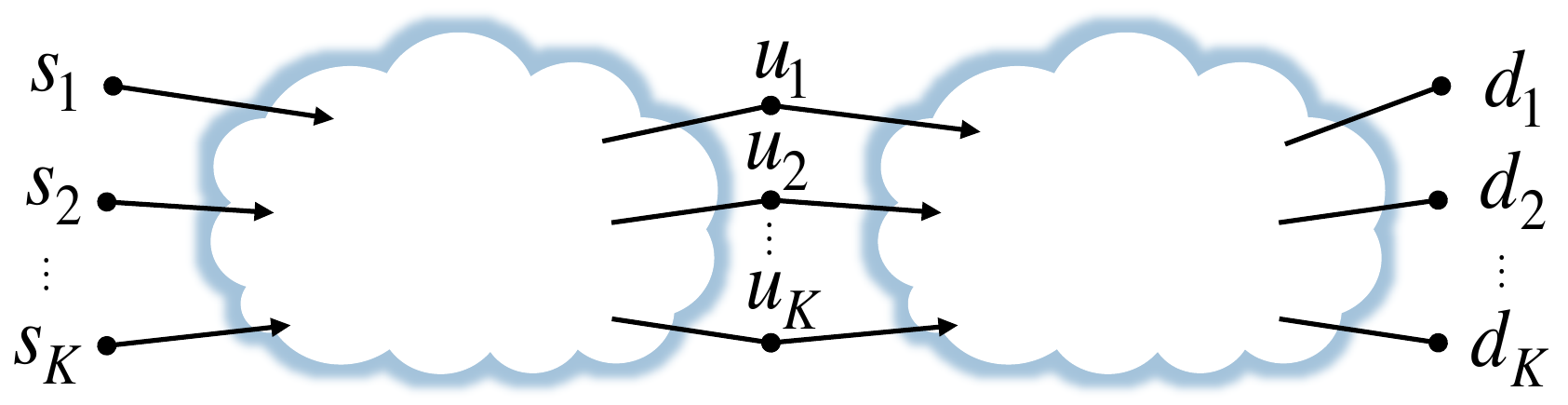} \caption{Concatenating two copies of $\N$ to obtain $\N^2$ \label{n2fig}}
\end{figure}
More precisely, we have the following claim,
whose proof, which is based on using coding scheme $\C$ on each copy of the network a repeated number of times, is presented in the the appendix.
\begin{claim} \label{claim1}
Let $C_\N$ and $C_{\N^\ell}$ be the capacity regions of a $K$-unicast memoryless network $\N$ and of the concatenation of $\ell$ copies of $\N$.
Then $C_\N \subseteq C_{\N^\ell}$.
%
%
\end{claim}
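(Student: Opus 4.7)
The plan is to take any achievable rate tuple $(R_1,\ldots,R_K)\in C_\N$, together with an associated coding scheme $\C_n$ for $\N$ of block length $n$ with error probability at most $\epsilon_0$, and to build from it a coding scheme $\C'$ on $\N^\ell$ that pipelines $\C_n$ across the $\ell$ copies over $M$ consecutive ``super-blocks'' of $n$ time steps each. Because the rate achieved by $\C'$ will be arbitrarily close to $(R_1,\ldots,R_K)$ and its error probability arbitrarily small, the tuple will lie in the closure $C_{\N^\ell}$, establishing $C_\N \subseteq C_{\N^\ell}$.

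The pipelined scheme $\C'$ I would use draws $M-\ell+1$ independent messages $W_i^{(1)},\ldots,W_i^{(M-\ell+1)}$ per source-destination pair, and arranges that in super-block $b$ and copy $j$ it runs a fresh instance of $\C_n$ to transmit $W_i^{(b-j+1)}$ whenever $1 \le b-j+1 \le M-\ell+1$, and a fixed dummy symbol otherwise. The identified node $s_i^{(j)}=d_i^{(j-1)}$ for $j\ge 2$ plays the role of a relay in $\N^\ell$: at the end of super-block $b-1$ it applies the $\C_n$-decoder $g_{d_i}$ to the signals it has received from copy $j-1$ during that super-block, and in super-block $b$ it applies the $\C_n$-encoder $f_{s_i}$ to the resulting estimate and transmits on copy $j$. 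Since the decoder acts causally on past receive signals and each super-block is processed independently, $\C'$ fits Definition~\ref{codedef} with block length $Mn$.

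The rate carried per source-destination pair is $R_i(M-\ell+1)/M$, which tends to $R_i$ as $M\to\infty$. For the error, a message is correctly delivered at $d_i^{(\ell)}$ if and only if all $\ell$ pipelined invocations of $\C_n$ along its path succeed, so a union bound over $K$ pairs, $M-\ell+1$ messages, and $\ell$ layers gives total error at most $K\ell(M-\ell+1)\epsilon_0$. Given any target $\epsilon,\delta>0$, I would first pick $M$ large enough so that $R_i(M-\ell+1)/M \ge R_i-\delta$ for all $i$, and then choose $\C_n$ with $\epsilon_0 \le \epsilon/(K\ell M)$, which is possible by achievability of $(R_1,\ldots,R_K)$ on $\N$. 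Letting $\delta\to 0$ and closing the achievable region of $\N^\ell$ finishes the argument. The main subtlety, rather than a true obstacle, is verifying that the identified destination-source nodes are bona fide relays of $\N^\ell$ whose transmissions are causal functions of past receive signals; the construction above makes this explicit, and the remainder is bookkeeping.
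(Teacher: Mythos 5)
Your proposal is correct and takes essentially the same approach as the paper's proof: both pipeline the block code $\C_n$ across the $\ell$ copies over many consecutive super-blocks of length $n$, with the identified destination/source nodes acting as causal relays that decode at the end of one super-block and re-encode in the next, and both conclude via a union bound over all invocations together with a vanishing rate loss from filling and draining the pipeline. The only immaterial difference is the order of quantifiers --- you fix the number of super-blocks $M$ first and then shrink $\epsilon_0$, whereas the paper ties the number of blocks to the error probability via $L=\lfloor\epsilon_n^{-1/2}\rfloor$.
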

Because of Claim~\ref{claim1}, we can apply the cut-set bound to $\N^\ell$ in order to bound any sum rate achievable in 
$\N$.
Hence, 
if we let $\V_1$ and $\V_2$ be the set of nodes of the first and second copies of the network respectively (and $\V_1 \cap \V_2 = \U$), we obtain
\al{
R_\Sigma & \leq \max_{p(x_{\V_1\cup \V_2})} \min_{\cut_1,\cut_2} I(X_{\cut_1},X_{\cut_2}; Y_{\cut_1^c},Y_{\cut_2^c}| X_{\cut_1^c},X_{\cut_2^c}) 
\non
& = \max_{p(x_{\V_1\cup \V_2})} \min_{\cut_1,\cut_2} I(X_{\cut_1},X_{\cut_2}; Y_{\cut_1^c}| X_{\cut_1^c},X_{\cut_2^c}) \nonumber \\
& \quad\quad\quad\quad  + I(X_{\cut_1},X_{\cut_2}; Y_{\cut_2^c}|  X_{\cut_1^c},X_{\cut_2^c}, Y_{\cut_1^c}), 
\label{n2bound}
}
where $R_\Sigma = \sum_{i=1}^K R_i$, $\cut_i^c = \V_i \setminus \cut_i$ for $i=1,2$ and
 the minimization is over $\cut_1 \subseteq \V_1$ and $\cut_2 \subseteq \V_2 \setminus \D$ such that $\S \subseteq \cut_1$ and $\cut_1 \cap \U = \cut_2 \cap \U$. 
\iflong
We point out that this argument is tied to the multi-unicast nature of the network, which requires each $u_i$ to be individually capable of decoding its message $W_i$.
\fi
Since 
$\Y_{s_i} = \emptyset$ and $\X_{d_i} = \emptyset$ for $i=1,...,K$,
we have 
the Markov chains 
$\MC{X_{\V_2}}{X_{\V_1}}{Y_{\V_1}}$ and $\MC{X_{\V_1}}{X_{\V_2}}{Y_{\V_2\setminus\U}}$. 
Therefore, it is not difficult to see that the mutual information terms in \eref{n2bound} can be bounded by
\aln{
& I(X_{\cut_1},X_{\cut_2},X_{\cut_2^c}; Y_{\cut_1^c}| X_{\cut_1^c}) \non
& + 
 I(X_{\cut_1},  X_{\cut_1^c},X_{\cut_2}; Y_{\cut_2^c}| X_{\cut_2^c}, Y_{\cut_1^c}) \non
& =  I(X_{\cut_1}; Y_{\cut_1^c}| X_{\cut_1^c}) + 
I(X_{\cut_2},X_{\cut_2^c}; Y_{\cut_1^c}| X_{\cut_1},X_{\cut_1^c}) \non
& + \hspace{-0.5mm}
I(X_{\cut_2}; Y_{\cut_2^c}| X_{\cut_2^c}, Y_{\cut_1^c}) \hspace{-0.5mm} + \hspace{-0.5mm}
I(X_{\cut_1},  X_{\cut_1^c}; Y_{\cut_2^c}| X_{\cut_2}, X_{\cut_2^c}, Y_{\cut_1^c}) \non
& \quad \quad =  I(X_{\cut_1}; Y_{\cut_1^c}| X_{\cut_1^c}) + I(X_{\cut_2}; Y_{\cut_2^c}| X_{\cut_2^c}, Y_{\cut_1^c}),
}
and \eref{n2bound} can be written as
\al{
R_\Sigma \leq \max_{p(x_{\V_1\cup \V_2})} \min_{\cut_1,\cut_2} & I(X_{\cut_1}; Y_{\cut_1^c}| X_{\cut_1^c}) \non & 
+ I(X_{\cut_2}; Y_{\cut_2^c}| X_{\cut_2^c}, Y_{\cut_1^c}).
\label{simplegen}
}
For a general $\ell$, by following the same argument, we conclude that, if a rate tuple $(R_1,...,R_K)$ is achievable on $\N$, then there exists a joint distribution $p(x_{\V_1 \cup ... \cup \V_\ell})$ on the transmit signals of the nodes of the concatenated network $\N^\ell$, such that
\al{
R_\Sigma \leq \min_{\cut_1,...,\cut_\ell} \sum_{j=1}^\ell I(X_{\cut_j}; Y_{\cut_j^c}| X_{\cut_j^c}, Y_{1}^c,...,Y_{\cut_{j-1}^c}),
\label{simplegen}
}
where the minimization is over subsets $\cut_1,...,\cut_\ell$ such that 
$d_i \in \cut_j \Leftrightarrow s_i \in \cut_{j+1}$ for $j=0,1,...,\ell$, $i=1,...,K$.

In order to obtain a bound on $C_\Sigma$ from \eref{simplegen}, one would need to maximize the right-hand side over all joint distributions $p(x_{\V_1 \cup ... \cup \V_\ell})$.
However, as we shall see next, this maximization will result in uninteresting bounds.
First we notice that,
due to the Markov Chain $\MC{Y_{\cut_1^c},...,Y_{\cut_{j-1}^c}}{X_{\V_j}}{Y_{\Omega_j^c \setminus \cut_{j-1}^c}}$, each term in \eref{simplegen} becomes
\aln{
& I(X_{\cut_j}; Y_{\cut_j^c}| X_{\cut_j^c}, Y_{1}^c,...,Y_{\cut_{j-1}^c}) \\ 
& \hspace{-0.2cm}= I(X_{\cut_j}; Y_{\cut_j^c\setminus \cut_{j-1}^c}| X_{\cut_j^c}) - I(Y_{\cut_1^c},...,Y_{\cut_{j-1}^c}; Y_{\cut_j^c\setminus \cut_{j-1}^c}| X_{\cut_j^c}),
}
and it is not difficult to see that \eref{simplegen} is always maximized by product distributions $p(x_{\V_1\setminus\V_2})p(x_{\V_2\setminus\V_3})...p(x_{\V_\ell})$.
In the case $\ell = 2$, for example, 
(\ref{simplegen}) implies that
\al{
R_\Sigma & \leq  \max_{p(x_{\V_1 \cup \V_2})} \min_{(\cut_1,\cut_2) \in \K} {{ \Big[ }}
I(X_{\cut_1}; Y_{\cut_1^c}| X_{\cut_1^c}) \nonumber \\
& \quad \quad + I(X_{\cut_2}; Y_{\cut_2^c\setminus \U}| X_{\cut_2^c}) - I(Y_{\cut_1^c}; Y_{\cut_2^c}| X_{\cut_2^c}) {{\Big]}} \label{simp2eqa}  \\
& = \max_{p(x_{\V_1 \setminus \V_2})p(x_{\V_2})} \min_{(\cut_1,\cut_2)\in \K} {{ \Big[ }}
I(X_{\cut_1}; Y_{\cut_1^c}| X_{\cut_1^c}) \nonumber \\ 
& \quad \quad + I(X_{\cut_2}; Y_{\cut_2^c\setminus \U}| X_{\cut_2^c}) {{\Big]}} \label{simp2eqb}
}
which is similar to applying the cut-set bound first to the pairs $\{(s_i,d_i) : i \in\I\}$ where $\I = \{ i : u_i \in \U \setminus \cut_1 \}$ and then to the pairs $\{(s_i,d_i) : i \notin\I\}$ (although not exactly the same).

In \tref{genthm}, we overcome this issue by, instead of taking cuts $\cut_1 \subset \V_1,...,\cut_\ell \subset \V_\ell$ from concatenated copies of $\N$, taking multiple cuts from $\N$ itself; i.e., $\cut_j \subset \V$, for $j=1,...,\ell$ (with the additional restriction that $\cut_1 \supseteq ... \supseteq \cut_\ell$).
Thus, for deterministic networks, this can be thought of as restricting the maximization in \eref{simp2eqb} to be over distributions where $X_{\V_1} = X_{\V_2}$ with probability $1$.
Intuitively, this choice makes the negative mutual information term in \eref{simp2eqb} as large as possible.
The following example illustrates the gains of the bound in Theorem~\ref{genthm} over the traditional cut-set bound.

%



\begin{example}
Consider the binary Z-channel in \fref{za}.
It is easy to see that $C_\Sigma = 1$, while the traditional cut-set bound only implies $C_\Sigma \leq 2$.
Now consider concatenating two copies of this Z-channel and choosing $\cut_1$ and $\cut_2$ as shown in \fref{zb}.
\begin{figure}[h] 
     \centering
       \subfigure[]{
       \includegraphics[trim=0 0 0 4mm, clip=false, width=0.42\linewidth]{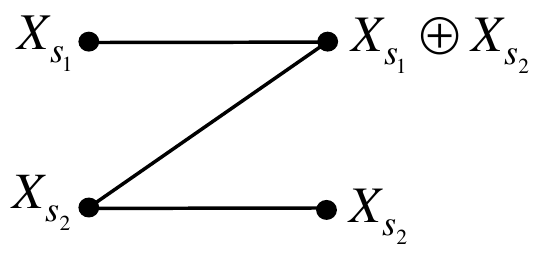} \label{za} }
    \hspace{0mm}
        \subfigure[]{
       \includegraphics[trim=0 0 0 4mm, clip=false, width=0.4\linewidth]{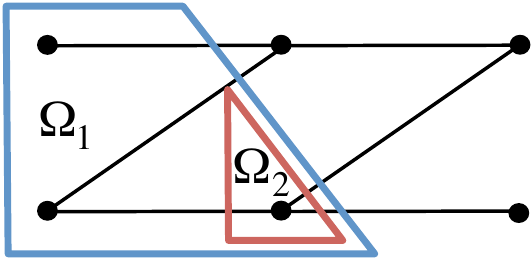} \label{zb} }  
       \caption{(a) A binary Z-channel, and (b) a possible choice of cuts for the concatenation of two binary Z-channels.
       }
\end{figure}
By maximizing over all distributions $p(x_{\V_1 \cup \V_2})$, as in \eref{simp2eqa}, we again obtain $C_\Sigma \leq 2$.
%
%
However, if we take the corresponding choices of $\cut_1$ and $\cut_2$ in \tref{genthm} (i.e., $\cut_1 = \{s_1,s_2,d_2\}$, $\cut_2 = \{s_2\}$ in the original network), we obtain
\aln{
 C_\Sigma & \leq I(X_{\cut_1}; Y_{\cut_1^c}| X_{\cut_1^c}) + I(X_{\cut_2}; Y_{\cut_2^c}| X_{\cut_2^c}, Y_{\cut_1^c}) \\
& = I(X_{s_1},X_{s_2};X_{s_1}\oplus X_{s_2}) \\
& \quad + I(X_{s_2}; X_{s_1}\oplus X_{s_2}, X_{s_2} | X_{s_1}, X_{s_1} \oplus X_{s_2})  \leq 1 + 0.
} 
\end{example}

\vspace{-3mm}


Next we prove \tref{genthm}.
Even though the motivation behind the result is based on the concatenation of multiple copies of a network $\N$, the actual proof does not involve the notion of concatenation and follows by manipulating mutual-information inequalities on the original network $\N$.

\begin{proof}[Proof of \tref{genthm}]
We first prove the case $\ell = 2$.
We let $\cut_1,\cut_2 \subset \V$ be such that $\S \subseteq \cut_1$, $\cut_2 \subseteq \cut_1 \setminus \D$ and $d_i \in \cut_1 \Leftrightarrow s_i \in \cut_{2}$ and
we let $\{\C_n\}$ be a sequence of coding schemes that achieves 
sum rate $R_\Sigma$ 
on $\N$. 
By applying coding scheme $\C_n$ of block length $n$ on $\N$, we obtain
\al{
n R_\Sigma & = H(W_\S) = I(W_\S; Y_{\D}^n) + H(W_\S|Y_{\D}^n) \non & 
\leqnum I(W_\S; Y_{\D}^n) + n \ep_n \leq  I(W_\S; Y_{\cut_2^c}^n) + n \ep_n \non
& = I(W_{\S}; Y_{\cut_2^c \cap \cut_1}^n, Y_{\cut_1^c}^n) + n \ep_n \non
& = I(W_{\S}; Y_{\cut_1^c}^n) + I(W_{\S\setminus\cut_2}, W_{\S \cap \cut_2}; Y_{\cut_2^c \cap \cut_1}^n | Y_{\cut_1^c}^n) + n \ep_n \non
& = 
\underbrace{I(W_{\S}; Y_{\cut_1^c}^n)}_{\rm I} 
+ \underbrace{I(W_{\S\setminus\cut_2}; Y_{\cut_2^c \cap \cut_1}^n | Y_{\cut_1^c}^n)}_{\rm II} \non
& \quad + \underbrace{I(W_{\S \cap \cut_2}; Y_{\cut_2^c \cap \cut_1}^n | Y_{\cut_1^c}^n, W_{\S\setminus\cut_2})}_{\rm III}
+ n \ep_n \label{termseq}
}
where \rescnt
\cnt follows from Fano's inequality. 
By following the steps in the usual cut-set bound proof (see \cite[Theorem 18.1]{elgamalbook}),  for term (I) we have
\al{
I(W_{\S}; Y_{\cut_1^c}^n) 
& = \sum_{t=1}^n I(W_{\S}; Y_{\cut_1^c}[t]| Y_{\cut_1^c}^{t-1}) \non
& = \sum_{t=1}^n  I(W_{\S}^n; Y_{\cut_1^c}[t]| Y_{\cut_1^c}^{t-1}, X_{\cut_1^c}[t]) \non
& \leq \sum_{t=1}^n  I(W_{\S}, Y_{\cut_1^c}^{t-1}; Y_{\cut_1^c}[t]|  X_{\cut_1^c}[t])  \non
& \leq \sum_{t=1}^n I(W_{\S}, Y_{\cut_1^c}^{t-1}, X_{\cut_1}[t]; Y_{\cut_1^c}[t]|  X_{\cut_1^c}[t]) \non
& \leq \sum_{t=1}^n I(X_{\cut_1}[t] ; Y_{\cut_1^c}[t]|  X_{\cut_1^c}[t]). \label{termI}
}
Term (II) can be upper-bounded by 
$H(W_{\S\setminus\cut_2} | Y_{\cut_1^c}^n) \leq n \ep'_n$, where $\ep'_n \to 0$ from Fano's inequality, since 
$s_i \in \S \setminus \cut_2 \Leftrightarrow d_i \in \cut_1^c \cap \D$. 
Finally, for term (III), we obtain \rescnt
\al{
& \hspace{-4mm} I(W_{\S \cap  \cut_2}; Y_{\cut_2^c \cap \cut_1}^n | Y_{\cut_1^c}^n, W_{\S\setminus\cut_2}) \non
& \leq I(W_{\S \cap \cut_2}; Y_{\cut_2^c}^n | Y_{\cut_1^c}^n, W_{\S\setminus\cut_2}) \non
& = \sum_{t=1}^n I(W_{\S \cap \cut_2}; Y_{\cut_2^c}[t] | Y_{\cut_2^c}^{t-1}, Y_{\cut_1^c}^n, W_{\S\setminus\cut_2}) \non
& \eqnum \sum_{t=1}^n I(W_{\S \cap \cut_2}; Y_{\cut_2^c}[t] | Y_{\cut_2^c}^{t-1}, Y_{\cut_1^c}^n, W_{\S\setminus\cut_2}, X_{\cut_2^c}[t]) \non
& \leq \sum_{t=1}^n I(W_{\S}, Y_{\cut_2^c}^{t-1}, Y_{\cut_1^c}^n, X_{\cut_2}[t]; Y_{\cut_2^c}[t] |  Y_{\cut_1^c}[t], X_{\cut_2^c}[t]) \non
& = \sum_{t=1}^n I(X_{\cut_2}[t]; Y_{\cut_2^c}[t] |  Y_{\cut_1^c}[t], X_{\cut_2^c}[t]) \non
& \quad \quad + I( W_{\S}, Y_{\cut_2^c}^{t-1}, Y_{\cut_1^c}^n; Y_{\cut_2^c}[t] | Y_{\cut_1^c}[t], X_{\V}[t]) \non
& \eqnum \sum_{t=1}^n I(X_{\cut_2}[t]; Y_{\cut_2^c}[t] |  Y_{\cut_1^c}[t], X_{\cut_2^c}[t]) \label{termIII}
}\rescnt
where
\cnt follows because from $Y_{\cut_2^c}^{t-1}$ we can build $X_{\cut_2^c \setminus \S}[t]$ and from $W_{\S \setminus \cut_2}$ we can build $X_{\cut_2^c \cap \S}[t]$ and
\cnt because $Y_{\cut_2^c}[t]$ is a function of $X_{\V}[t]$.
Therefore, \eref{termseq} implies that 
\aln{
R_\Sigma \leq & \frac1n \sum_{t=1}^n {{[}} I(X_{\cut_1}[t] ; Y_{\cut_1^c}[t]|  X_{\cut_1^c}[t]) \non
& + 
I(X_{\cut_2}[t]; Y_{\cut_2^c}[t] |  Y_{\cut_1^c}[t], X_{\cut_2^c}[t]) {{]}}
+ (\ep_n + \ep'_n).
}
Following \cite{elgamalbook}, we let $Q$ be a uniform r.v. on $\{1,...,n\}$  and we set $\tilde X_\V = X_\V[Q]$ so that $\MC{Q}{\tilde X_\V}{\tilde Y_\V}$, and we obtain
\aln{
R_\Sigma & \leq I(X_{\cut_1}[Q] ; Y_{\cut_1^c}[Q]|  X_{\cut_1^c}[Q],Q) \non
& \;\; + 
I(X_{\cut_2}[Q]; Y_{\cut_2^c}[Q] |  Y_{\cut_1^c}[Q], X_{\cut_2^c}[Q],Q) +\ep_n + \ep'_n \non
& \leq I(X_{\cut_1}[Q] ; Y_{\cut_1^c}[Q]|  X_{\cut_1^c}[Q]) \non
& \; \; + 
I(X_{\cut_2}[Q]; Y_{\cut_2^c}[Q] |  Y_{\cut_1^c}[Q], X_{\cut_2^c}[Q]) + \ep_n + \ep'_n \non
& \leq  I(\tilde X_{\cut_1} ; \tilde  Y_{\cut_1^c} | \tilde  X_{\cut_1^c} ) + 
I(\tilde X_{\cut_2}; \tilde Y_{\cut_2^c} |  \tilde Y_{\cut_1^c}, \tilde X_{\cut_2^c}) + \ep_n'',
}
where we let $\ep_n'' = \ep_n + \ep_n'$, and $\ep_n'' \to 0$ as $n \to \infty$. 
This concludes the proof in the case $\ell =2$.

Now consider the case $\ell = 3$.
Similar to the expression obtained in \eref{termseq}, this time we upper bound the sum rate as
\al{ \rescnt
n R_\Sigma & 
\leq I(W_\S; Y_{\D}^n) + n \ep_n \leq  I(W_\S; Y_{\cut_3^c}^n) + n \ep_n \non
& = I(W_{\S\setminus\cut_2}, W_{\S \cap \cut_2}; Y_{\cut_3^c \cap \cut_1}^n, Y_{\cut_1^c}^n) + n \ep_n \non
& = 
\underbrace{I(W_{\S}; Y_{\cut_1^c}^n)}_{\rm I} 
+ \underbrace{I(W_{\S\setminus\cut_2}; Y_{\cut_3^c \cap \cut_1}^n | Y_{\cut_1^c}^n)}_{\rm II} \non
& \quad + \underbrace{I(W_{\S \cap \cut_2}; Y_{\cut_3^c \cap \cut_1}^n | Y_{\cut_1^c}^n, W_{\S\setminus\cut_2})}_{\rm III}
+ n \ep_n \label{termseq2}
}
Term (I) can be bounded as in \eref{termI} and term (II) can be bounded with a Fano error term as we did for term (II) in \eref{termseq}.
Term (III) can be rewritten as
\aln{
& \hspace{-2mm} I(W_{\S \cap \cut_2}; Y_{\cut_3^c \cap \cut_1}^n | Y_{\cut_1^c}^n, W_{\S\setminus\cut_2}) \non
& = I(W_{\S \cap \cut_2}; Y_{\cut_2^c \cap \cut_1}^n, Y_{\cut_3^c \cap \cut_2}^n | Y_{\cut_1^c}^n, W_{\S\setminus\cut_2}) \non
& = \underbrace{I(W_{\S \cap \cut_2}; Y_{\cut_2^c \cap \cut_1}^n | Y_{\cut_1^c}^n, W_{\S\setminus\cut_2})}_{\rm IV} \non
& \quad + \underbrace{I(W_{\S \cap \cut_2}; Y_{\cut_3^c \cap \cut_2}^n | Y_{\cut_1^c}^n, W_{\S\setminus\cut_2}, Y_{\cut_2^c \cap \cut_1}^n)}_{\rm V}.
}
Term (IV) is the same as term (III) in \eref{termseq} and can be upper-bounded as in \eref{termIII}.
Term (V) is further broken down as
\aln{
& \hspace{-5mm} I(W_{(\S \cap \cut_2)\setminus \cut_3}, W_{\S \cap \cut_3}; Y_{\cut_3^c \cap \cut_2}^n | Y_{\cut_2^c}^n, W_{\S\setminus\cut_2}) \non
& = I(W_{(\S \cap \cut_2)\setminus \cut_3}; Y_{\cut_3^c \cap \cut_2}^n | Y_{\cut_2^c}^n, W_{\S\setminus\cut_2}) \non
& \quad \quad  + I(W_{\S \cap \cut_3}; Y_{\cut_3^c \cap \cut_2}^n | Y_{\cut_2^c}^n, W_{\S\setminus\cut_3}) \non
& \leq \underbrace{I(W_{\S \setminus \cut_3}; Y_{\cut_3^c \cap \cut_2}^n | Y_{\cut_2^c}^n, W_{\S\setminus\cut_2})}_{\rm VI} 
\non & \quad \quad 
+ \underbrace{I(W_{\S \cap \cut_3}; Y_{\cut_3^c \cap \cut_2}^n | Y_{\cut_2^c}^n, W_{\S\setminus\cut_3})}_{\rm VII}.
}
As in the case of term (II), we can upper-bound term (VI) by $H(W_{\S\setminus\cut_3} | Y_{\cut_2^c}^n) \leq n \ep''_n$, where $\ep''_n \to 0$ from Fano's inequality, since 
$s_i \in \S \setminus \cut_3 \Leftrightarrow d_i \in \cut_2^c \cap \D$.
Finally, we notice that term (VII) is exactly like term (IV) after increasing all indices by one, and can again be upper-bound as in \eref{termIII}.
By combining all these facts, from \eref{termseq2}, the sum-rate is upper-bounded as
\aln{
R_\Sigma \leq & \frac1n \sum_{t=1}^n {{[}} I(X_{\cut_1}[t] ; Y_{\cut_1^c}[t]|  X_{\cut_1^c}[t]) \non
& + 
I(X_{\cut_2}[t]; Y_{\cut_2^c}[t] |  Y_{\cut_1^c}[t], X_{\cut_2^c}[t]) \non
& + 
I(X_{\cut_3}[t]; Y_{\cut_3^c}[t] |  Y_{\cut_2^c}[t], X_{\cut_3^c}[t]) {{]}}
+ \ep'''_n.
}
Using the same time-sharing variable $Q$ as for the case $\ell = 2$, we conclude the proof for $\ell = 3$.
It is straightforward to see that similar steps can be performed for any $\ell \geq 1$.
\end{proof}

While in the wireline case, the bound in Theorem~\ref{genthm} recovers the GNS bound, its most interesting applications are in  wireless settings.
In the next section, we first consider several applications of the bound for wireless deterministic network models.
We then show how, in the wireline case, the bound reduces to the GNS bound but, under certain restrictions on the allowed coding schemes (such as linear operations), it can be used to obtain tighter bounds.

\section{Applications of the Bound}


We will first consider finite-field deterministic networks and obtain a general outer-bound expression for the sum rate. 
In the case of binary $\kkk$ networks, this bound provides
necessary and sufficient conditions for sum rate $K$ to be achievable.
We then shift our focus to two-hop AWGN networks.
Even though our outer bound only applies to deterministic networks, we will make use of a result  from \cite{deterministickkk} that relates $\kkk$ AWGN networks with a deterministic counterpart to obtain a bound for the DoF of $\kkk$ networks with arbitrary connectivity.
%
%
%
%
This bound, combined with a variation of the coding scheme introduced in \cite{dofkkk} is then used to establish necessary and sufficient conditions for $K$ DoF to be achievable on a $\kkk$ AWGN network
and to establish the DoF of the case of ``adjacent-cell interference''.

\subsection{Linear Finite-Field Networks} \label{ffsec}

A $K$-unicast linear finite-field network $\N$ is described by a directed graph $G = (\V,\E)$, where $\V$ is the node set and $\E$ is the edge set.
If the network is layered, the node set $\V$ can be partitioned into $r$ subsets $\V_1,\V_2,...,\V_r$ (the layers) in such a way that $\E \subset \bigcup_{i=1}^{r-1} \V_i \times \V_{i+1}$, and $\V_1 = \S = \{s_1,...,s_K\}$, $V_r = \D = \{d_1,...,d_K\}$.  
To each edge $(u,w) \in \E$ we associate a nonzero channel gain $F(u,w)$ from a given finite field $\F$.
For two sets of nodes 
$\U \subseteq \V_i$ and $\W \subseteq \V_{i+1}$, we let $F(\U,\W)$ be the $|\W| \times |\U|$ transfer matrix 
from $\U$ to $\W$.
The received signals at layer $\V_{j+1}$ are given by $Y_{\V_{j+1}}[t] = F(\V_j,\V_{j+1}) X_{\V_j}[t]$ for $t=1,...,n$.
For conciseness, we let $\rank (\U;\W) \defi \rank F(\U,\W)$, and for $\cut \subset \V$, we let $\cut[j] = \cut \cap \V_j$.
We also let $\bar C_\Sigma = C_\Sigma / \log|\F|$ be the normalized sum capacity.
We have the following corollary of \tref{genthm}.

\begin{cor} \label{ffcor}
For a layered $K$-unicast linear finite-field network $\N$ as described above,
if $R_\Sigma \leq \bar C_\Sigma$, we must have 
\al{
& R_\Sigma \leq \sum_{j=1}^{r-1} \rank(\cut[j]; \cut^c[j+1])
+ \rank(\cutb[j]; \cutb^c[j+1]) \non 
& \hspace{20mm} - \rank( \cutb[j] ; \cut^c[j+1])
\label{ffgen}
}
for any node subsets $\cut$ and $\cutb$ such that $\cutb \subset \cut \setminus \D$, $\S \subset \cut$ and 
$d_i \in \cut \Leftrightarrow s_i \in \cutb$. 
\end{cor}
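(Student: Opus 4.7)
The plan is to specialize Theorem~\ref{genthm} to $\ell=2$ with the choices $\cut_1 = \cut$ and $\cut_2 = \cutb$. The hypotheses of the theorem transcribe directly to the hypotheses of the corollary: the nesting $\V \supseteq \cut \supseteq \cutb \supseteq \emptyset$ is immediate, while the three biconditionals $d_i \in \cut_j \Leftrightarrow s_i \in \cut_{j+1}$ for $j=0,1,2$ become $\S \subseteq \cut$, $d_i \in \cut \Leftrightarrow s_i \in \cutb$, and $\cutb \cap \D = \emptyset$, all of which are assumed. By the deterministic-network remark following Theorem~\ref{genthm}, the resulting bound reads
\[
R_\Sigma \log|\F| \;\leq\; H(Y_{\cut^c}\mid X_{\cut^c}) \,+\, H(Y_{\cutb^c}\mid X_{\cutb^c}, Y_{\cut^c}),
\]
so it suffices to bound each of these entropies by the corresponding sum of rank terms and then normalize by $\log|\F|$.

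Because the network is layered, $Y_{\V_{j+1}}$ is a linear function of $X_{\V_j}$ alone, and additional conditioning can only decrease entropy. Chain-rule across layers and drop redundant conditioning to obtain
\[
H(Y_{\cut^c}\mid X_{\cut^c}) \;\leq\; \sum_{j=1}^{r-1} H(Y_{\cut^c[j+1]}\mid X_{\cut^c[j]}).
\]
Given $X_{\cut^c[j]}$, the residual $F(\cut[j],\cut^c[j+1])\,X_{\cut[j]}$ takes at most $|\F|^{\rank(\cut[j];\cut^c[j+1])}$ values, producing the first family of rank terms. Decomposing the second entropy similarly yields the per-layer quantity $H(Y_{\cutb^c[j+1]}\mid X_{\cutb^c[j]}, Y_{\cut^c[j+1]})$. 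Since $\cut^c \subseteq \cutb^c$, the conditioning on $X_{\cutb^c[j]}$ already pins down $X_{\cut^c[j]}$, so from $Y_{\cut^c[j+1]}$ we can extract $F(\cut[j],\cut^c[j+1])\,X_{\cut[j]}$, and then the contribution of $X_{\cut[j]\setminus\cutb[j]}$ can also be subtracted off. This reduces the task to bounding $H(MX \mid NX)$ where $X=X_{\cutb[j]}$, $M = F(\cutb[j],\cutb^c[j+1])$, and $N = F(\cutb[j],\cut^c[j+1])$.

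The linear-algebra crux is that, since $\cut^c[j+1] \subseteq \cutb^c[j+1]$, the rows of $N$ form a subset of the rows of $M$, so $\rank\binom{M}{N}=\rank M$. By rank--nullity applied to $M$ on $\ker N$, the image $M(\ker N)$ has dimension exactly $\rank M - \rank N$; hence for each fixed value of $NX$ the vector $MX$ lies in a coset of $M(\ker N)$, giving
\[
H(MX\mid NX)\;\leq\;\log|\F|\cdot\bigl(\rank(\cutb[j];\cutb^c[j+1])-\rank(\cutb[j];\cut^c[j+1])\bigr).
\]
Summing over $j=1,\ldots,r-1$ and dividing by $\log|\F|$ yields the bound in \eqref{ffgen}. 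The main obstacle is this last step: the clean identity $\dim M(\ker N) = \rank M - \rank N$ depends critically on the row containment forced by the nesting $\cutb \subseteq \cut$, and without it only the weaker inequality $\dim M(\ker N) \geq \rank M - \rank N$ would be available, which would not produce the desired rank-difference formula.
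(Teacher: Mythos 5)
Your proposal is correct and follows essentially the same route as the paper: specialize Theorem~\ref{genthm} to $\ell=2$ with $\cut_1=\cut$ and $\cut_2=\cutb$, decompose the two conditional entropies layer by layer, and reduce each second-cut term to $H(MX\mid NX)$ with $M=F(\cutb[j],\cutb^c[j+1])$ and $N=F(\cutb[j],\cut^c[j+1])$. The only difference is cosmetic and lies in the rank inequality: the paper proves the general bound $H(A{\bf x}\mid B{\bf x})\le \left(\rank \begin{bmatrix} A \\ B \end{bmatrix}-\rank B\right)\log|\F|$ by row selection (its Lemma~\ref{ranklem}) and implicitly uses the row containment from $\cut^c[j+1]\subseteq\cutb^c[j+1]$ to identify the stacked rank with $\rank M$, whereas you exploit that containment up front via rank--nullity on $\ker N$; both are valid.
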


\iflong

\begin{proof}
We apply \tref{genthm} with $\cut_1 = \cut$ and $\cut_2 = \cutb$.
For the first term in the sum in \eref{cutsetgen}, we have
\aln{
H(Y_{\cut^c}| X_{\cut^c}) \leq \sum_{j=1}^{r-1} \rank(\cut[j]; \cut^c[j+1]) \cdot \log |\F|,
}
and for the second term we have 
\aln{ \rescnt
& H(Y_{\cutb^c}  |  X_{\cutb^c},Y_{\cut^c})  
\leq \sum_{j=1}^{r-1} H(Y_{\cutb^c[j+1]}|X_{\cutb^c[j]}, Y_{\cut^c[j+1]}) \\
& \leq \sum_{j=1}^{r-1} H(F(\cutb[j];\cutb^c[j+1])X_{\cutb[j]}| F(\cutb[j];\cut^c[j+1])X_{\cutb[j]}) \\
& \leqnum \sum_{j=1}^{r-1} {{\Big(}} \rank(\cutb[j]; \cutb^c[j+1]) \non
& \quad \quad \quad \quad  - \rank( \cutb[j] ; \cut^c[j+1]) {{\Big)}}\cdot \log |\F|,
} \rescnt
where \cnt follows since $H(A {\bf x} | B{\bf x}) / \log|\F| \leq \rank \begin{bmatrix} A \\ B \end{bmatrix}  - \rank B$ 
\iflong
from Lemma \ref{ranklem} in the appendix.
\else
(see \cite{gencutset} for details).
\fi
\end{proof}

\else
The proof consists of applying \tref{genthm} with $\cut_1 = \cut$ and $\cut_2 = \cutb$, and using the fact that $H(A {\bf x} | B{\bf x}) / \log|\F| \leq \rank \begin{bmatrix} A \\ B \end{bmatrix}  - \rank B$ for a random vector ${\bf x}$. 
See \cite{gencutset} for details.
\fi

We point out that it is straightforward to generalize Corollary \ref{ffcor} to the wireless deterministic network model from \cite{ADT11} or to general finite-field networks with MIMO nodes.

We now shift our focus to $\kkk$ networks, 
%
i.e., when $r = 3$ and $\V_2 = \{u_1,...,u_K\} \defi \U$.
This network was recently studied in the AWGN case in \cite{dofkkk}, where $K$ DoF were shown to be achievable.
This result suggested that significant gains can be obtained from two-hop interference management, raising interest in the study of different two-hop network models.
The following result provides necessary conditions for sum rate $K$ to be achieved in the finite-field case.

\begin{cor} \label{ffcor2}
For a $\kkk$ finite-field network, if $\bar C_\Sigma = K$, then 
$F(\U,\D)$ and $F(\S,\U)$ must be invertible and 
\begin{enumerate}[(i)]
\item $F(s_i,u_j) = 0$ if and only if $\det F(\U \setminus \{u_j\},\D \setminus \{d_i\}) = 0$, for any $i,j$
\item $F(u_j,d_i) = 0$ if and only if $\det F(\S \setminus \{s_i\},\U \setminus \{u_j\}) = 0$, for any $i,j$.
\end{enumerate}
Otherwise, 
$\bar C_\Sigma \leq K-1$.
\end{cor}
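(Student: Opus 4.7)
The plan is to apply Corollary~\ref{ffcor} with four carefully chosen families of cut pairs $(\cut,\cutb)$, one for each way the stated conditions can fail. Since $\bar C_\Sigma \leq K$ trivially, it suffices to exhibit, for each failure mode, a choice of $(\cut,\cutb)$ making the right-hand side of \eref{ffgen} at most $K-1$.

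First I would dispatch the invertibility conditions. Taking $\cut=\S$, $\cutb=\emptyset$ collapses \eref{ffgen} to $\bar C_\Sigma \leq \rank(\S;\U)\leq K-1$ whenever $F(\S,\U)$ is singular, and taking $\cut=\V$, $\cutb=\S\cup\U$ similarly yields $\rank(\U;\D)\leq K-1$ when $F(\U,\D)$ is singular. Next I would address condition (i), which can fail in two directions. For the failure $F(s_i,u_j)=0$ but $\det F(\U\setminus\{u_j\},\D\setminus\{d_i\})\neq 0$, taking $\cut=\S\cup\U\cup\{d_i\}$ and $\cutb=\{s_i\}\cup(\U\setminus\{u_j\})$ makes the six rank terms in \eref{ffgen} evaluate to $0+(K-1)+0+(K-1)-0-(K-1)=K-1$, exploiting both the zero entry and the invertibility of the minor. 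For the opposite direction $F(s_i,u_j)\neq 0$ with $\det=0$, the choice $\cut=\S\cup(\U\setminus\{u_j\})\cup\{d_i\}$, $\cutb=\{s_i\}$ yields $\rank(\S;\{u_j\})+\rank(\U\setminus\{u_j\};\D\setminus\{d_i\})+1+0-1-0 \leq 1+(K-2)=K-1$. Condition (ii) is handled by dual cuts: $\cut=\S\cup\{u_j\}\cup(\D\setminus\{d_i\})$ with $\cutb=\S\setminus\{s_i\}$ for $F(u_j,d_i)=0, \det\neq 0$, and $\cut=\S\cup\U\cup(\D\setminus\{d_i\})$ with $\cutb=(\S\setminus\{s_i\})\cup\{u_j\}$ for $F(u_j,d_i)\neq 0, \det=0$, each yielding bound $K-1$ by analogous rank computations.

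The hard part is identifying the correct asymmetric cuts for the ``reverse'' failures of (i) and (ii), where the minor is singular but the corresponding single entry is nonzero. The naive dual of the first direction's cut does not suffice --- even for $K=2$ it only gives the trivial bound $K$. The resolution is to exclude the problematic relay $u_j$ from $\cut$ entirely and to put only the single source $\{s_i\}$ (or its destination-side analogue) into $\cutb$, so that the rank-deficient $(K-1)\times(K-1)$ minor appears in the positive term $\rank(A;\D\setminus T)$ (bounded by $K-2$) rather than being relied upon for cancellation in a negative term. Discovering this asymmetric choice is what makes Corollary~\ref{ffcor} tight across all four failure subcases, completing the proof by verifying that in every failure mode the six ranks in \eref{ffgen} sum to at most $K-1$.
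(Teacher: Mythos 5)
Your proof is correct and follows essentially the same route as the paper: the same four choices of $(\cut,\cutb)$ in Corollary~\ref{ffcor}, with the same rank evaluations, merely organized as a case analysis on the failure modes rather than as a pair of two-sided inequalities forced to hold with equality. The only (minor) difference is that your contrapositive organization needs only one-sided rank bounds in each branch (rank $\leq K-2$ when the minor is singular, $= K-1$ when it is not), so you bypass the paper's appeal to Lemma~\ref{submatrixlem}.
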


\iftrue

\begin{proof}
Clearly, if $F(\U,\D)$ or $F(\S,\U)$ are not invertible, $\bar C_\Sigma \leq K-1$.
We consider applying Corollary~\ref{ffcor} with four different choices of $\cut$ and $\cutb$.
For $\cut = \S \cup \U \cup \{d_i\}$ and $\cutb = \{s_i\} \cup (\U\setminus\{u_j\})$, 
if $\bar C_\Sigma = K$,
we obtain
\aln{
& K = \bar C_\Sigma 
\leq \rank(\S; \emptyset) + \rank(\{s_i\}; \{u_j\}) - \rank( \{s_i\} ; \emptyset) \non
& \quad \quad \quad \quad \quad + \rank(\U;\D \setminus \{d_i\}) + \rank(\U \setminus\{u_j\}; \D) \non
& \quad \quad \quad \quad \quad  - \rank( \U \setminus\{u_j\}; \D \setminus \{d_i\}) \non
& \leq 2(K-1)  + \rank(\{s_i\};\{u_j\}) - \rank( \U \setminus \{u_j\}; \D\setminus\{d_i\}) }
which implies 
\al{ \label{ffproof1}
\rank( \U \hspace{-0.5mm}\setminus\hspace{-0.5mm} \{u_j\}; \D\setminus\{d_i\}) - \rank(\{s_i\};\{u_j\}) \leq K\hspace{-0.5mm}-\hspace{-0.3mm}2.
}
Next, by choosing
$\cut = \S  \cup \U \setminus \{u_j\} \cup \{d_i\}$ and $\cutb = \{s_i\}$
$\cut = \S  \cup \{u_j\} \cup \D \setminus \{d_i\}$ and $\cutb = \S \setminus \{s_i\}$, and $\cut = \S  \cup \U \cup \D \setminus \{d_i\} $ and $\cutb = \S \setminus \{s_i\} \cup \{u_j\}$
we respectively obtain
\al{ 
& K \hspace{-0.5mm}-\hspace{-0.3mm} 2 \leq \rank( \U\hspace{-0.5mm} \setminus \hspace{-0.5mm}\{u_j\}; \D\setminus\{d_i\}) - \rank(\{s_i\};\{u_j\}), \label{ffproof2} 
\\
& \rank( \S \setminus \{s_i\}; \U\hspace{-0.5mm}\setminus\hspace{-0.5mm}\{u_j\}) - \rank(\{u_j\};\{d_i\}) \leq K\hspace{-0.5mm}-\hspace{-0.3mm}2, \label{ffproof3}\\
& K \hspace{-0.5mm}-\hspace{-0.3mm} 2 \leq \rank( \S \setminus \{s_i\}; \U\hspace{-0.5mm}\setminus\hspace{-0.5mm}\{u_j\}) - \rank(\{u_j\};\{d_i\}). \label{ffproof4}
}
Combining \eref{ffproof1} and \eref{ffproof2}, we conclude 
that \eref{ffproof2} holds with equality.
Since $F(\U,\D)$ is invertible, by Lemma \ref{submatrixlem}, $\rank( \U \setminus \{u_j\}; \D\setminus\{d_i\}) = K-2$ if 
$\det F(\U \setminus \{u_j\},\D \setminus \{d_i\}) = 0$ and $\rank( \U \setminus \{u_j\}; \D\setminus\{d_i\}) = K-1$ otherwise, implying ($i$).
Similarly, \eref{ffproof3}, \eref{ffproof4} and the fact that $F(\S,\U)$ is invertible imply ($ii$).
\end{proof}

\else

The proof consists of applying Corollary~\ref{ffcor} with four different choices of $\cut$ and $\cutb$.
See \cite{gencutset} for details.
\fi
In the case $\F = GF(2)$, the conditions in Corollary \ref{ffcor2} 
are in fact sufficient, and they imply the following:

\begin{cor} \label{ffcor3}
For a $\kkk$ finite-field network with $\F = GF(2)$, $C_\Sigma = K$ if and only if $F(\U,\D) F(\S,\U) = I$.
\end{cor}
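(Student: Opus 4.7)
I would split the argument into the two implications, treating the sufficient direction first and then leveraging Corollary~\ref{ffcor2} for the necessary one.

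For sufficiency, assume $F(\U,\D)F(\S,\U) = I$ over $GF(2)$. I would give a simple one-shot forwarding scheme: let each source $s_i$ transmit an information bit $W_i[t]$ at every time $t$, let each relay $u_j$ broadcast at time $t+1$ the bit $Y_{u_j}[t]$ it received at time $t$, and let each destination read its received symbol. At time $t+1$ the vector of received signals at $\D$ is $F(\U,\D)\,Y_\U[t] = F(\U,\D)F(\S,\U)\,W[t] = W[t]$, so $d_i$ recovers $W_i[t]$ noiselessly and the scheme achieves sum rate $K$. This direction therefore requires no bound at all.

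For necessity, I would appeal directly to Corollary~\ref{ffcor2}. Suppose $\bar C_\Sigma = K$; then $M \defi F(\S,\U)$ and $N \defi F(\U,\D)$ are invertible and condition $(i)$ holds, i.e.\ for all $i,j$,
\[
F(s_i,u_j)=0 \iff \det F(\U\setminus\{u_j\},\D\setminus\{d_i\})=0.
\]
Over an arbitrary field this only relates the supports of two matrices, but over $GF(2)$ each side is a single bit, so ``$a=0 \Leftrightarrow b=0$'' forces $a=b$. The left-hand side $F(s_i,u_j)$ is the $(j,i)$ entry of $M$, and the right-hand side is, up to a sign that is irrelevant over $GF(2)$, the $(j,i)$ entry of the adjugate of $N$; since $N$ is invertible over $GF(2)$ we have $\det N = 1$, hence the right-hand side is precisely the $(j,i)$ entry of $N^{-1}$. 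Consequently $M = N^{-1}$, i.e.\ $F(\U,\D)\,F(\S,\U) = I$, as claimed. (Condition $(ii)$ yields the same identity from the transposed perspective and serves as a consistency check.)

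The main step that deserves care is the index bookkeeping: one must verify that the matrix whose determinant appears in Corollary~\ref{ffcor2}$(i)$ is literally the $(j,i)$-minor of $N$ under the conventions fixed in Section~\ref{ffsec} (namely that $F(\U,\W)$ is $|\W|\times|\U|$), so that the adjugate identity lines up with $M$ entry-by-entry. Once that is verified, the binary-field collapse of ``iff zero'' into equality is immediate and the corollary follows with essentially no further computation.
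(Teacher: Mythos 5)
Your proposal is correct and follows essentially the same route as the paper: achievability by simple relay forwarding, and necessity by combining Corollary~\ref{ffcor2} with the adjugate/cofactor identity and the fact that over $GF(2)$ an ``iff zero'' condition collapses to equality (with $\det = 1$ for any invertible matrix). The only cosmetic difference is that you invoke condition $(i)$ to obtain $F(\S,\U)=F(\U,\D)^{-1}$ while the paper invokes condition $(ii)$ to obtain $F(\S,\U)^{-1}=F(\U,\D)$; these are the same identity.
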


\begin{proof}
Clearly, if $C_\Sigma = K$, $F(\S,\U)$ must be invertible. 
When $\F = GF(2)$, condition $(ii)$ in Corollary \ref{ffcor2} is equivalent to $\det F(\S \setminus \{s_i\},\U\setminus\{u_j\}) = F (u_j,d_i)$.
By definition, the $(i,j)$th entry of $F(\S,\U)^{-1}$ can be written as the $(j,i)$th cofactor of $F(\S,\U)$ divided by $\det F(\S,\U) = 1$, i.e., 
\aln{ \rescnt
\left[F(\S,\U)^{-1} \right]_{i,j} & = \frac{ \det F(\S\setminus \{s_i\},\U\setminus\{u_j\})}{\det F(\S,\U)} \non
& = F (u_j,d_i) = \left[ F(\U,\D) \right]_{i,j},
} \rescnt
and we conclude that $F(\U,\D)F(\S,\U)=I$.
Obviously, in this case, sum rate $K$ can be achieved by having each relay forward its received signal. %
\end{proof}

\subsection{Two-hop AWGN Networks}

In this section we focus on $\kkkl{0.5}$ wireless networks under an AWGN channel model.
We follow the setup in Section \ref{ffsec}, except that $\F = \R$, 
\al{
\textstyle{Y_{v}[t] = \sum_{u \in \V} F(u,v) X_{u}[t] + Z_{v}[t]}
}
is the received signal at node $v \in \V \setminus \S$ at time $t$,
where $Z_{v}[t]$ is the usual additive white Gaussian noise process,
and there is a transmit power constraint $E[X_v^2] \leq P$ for $v \in \V \setminus \D$.
We will also consider the truncated deterministic channel model \cite{ADT11}, where 
we still have a power constraint on $X_v$, but
\al{
\textstyle{Y_{v}[t] = \left\lfloor \sum_{u \in \V} F(u,v) X_{u}[t] \right\rfloor},
}
is the received signal.
Based on the characterization of the Gaussian noise as the worst-case additive noise for wireless networks in \cite{wcnoisefull}, 
the following result was established in \cite{deterministickkk}, relating the sum DoF, $D_\Sigma$, 
under these two models:

\begin{lemma}[{{\cite[Corollary 2]{deterministickkk}}}] \label{truncatedlem}
If $F(\U,\D)$ and $F(\S,\U)$ are invertible, the sum DoF of the $\kkk$ wireless network under the AWGN channel model and under the truncated channel model satisfy $D_{\Sigma,{\rm AWGN}} \leq D_{\Sigma,{\rm Truncated}}$.
\end{lemma}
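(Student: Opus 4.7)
My plan is to construct, from any coding scheme for the AWGN $\kkk$ network, an equivalent coding scheme for the truncated model that achieves the same sum rate up to an additive constant, so the DoF is preserved. The main tool is \emph{subtractive dithering}, which turns the signal-dependent quantization error of the floor operation into additive noise independent of the coded signals, so that the worst-case-noise result of \cite{wcnoisefull} can then be invoked. Concretely, I would fix an AWGN scheme $\C^{\text{AWGN}}$ achieving sum DoF $D$, and let $\{U_v[t]\}$ be common-randomness dither sequences known to all nodes, with each $U_v[t]$ uniform on $[0,1)$ independently across $v$ and $t$. Each source sends $X_{s_i}^{\text{trunc}}[t]=X_{s_i}^{\text{AWGN}}[t]+U_{s_i}[t]$ on the truncated network.

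Next, at each relay $u_j$ I would cancel the source dithers by computing
\[
\tilde Y_{u_j}[t] = Y_{u_j}^{\text{trunc}}[t] - \sum_i F(s_i,u_j)\,U_{s_i}[t] = \sum_i F(s_i,u_j)\,X_{s_i}^{\text{AWGN}}[t] + N_{u_j}[t],
\]
where by the standard subtractive-dither identity $N_{u_j}[t]$ is uniform on $(-1,0]$ and independent of the AWGN codewords. The relay then runs the AWGN relaying function on $\tilde Y_{u_j}$ to produce $X_{u_j}^{\text{AWGN}}[t]$ and actually transmits $X_{u_j}^{\text{AWGN}}[t]+U_{u_j}[t]$; destinations perform the analogous cancellation before running the AWGN decoder. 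The effective end-to-end channel seen by $\C^{\text{AWGN}}$ is thus a two-hop wireless network with genuinely \emph{additive} noise of bounded variance at every relay and destination, with the same underlying transfer matrices $F(\S,\U)$ and $F(\U,\D)$.

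Finally, I would invoke the worst-case-noise result of \cite{wcnoisefull}: among additive noises of a given covariance on a wireless network, Gaussian noise minimizes the capacity region. Applied here, this says $\C^{\text{AWGN}}$ achieves at least the rate on the uniform-noise network that it does on a Gaussian network of the same noise variance, and scaling the noise variance by a constant only shifts capacity by an additive constant and therefore preserves DoF. Taking the supremum over $\C^{\text{AWGN}}$ would then yield $D_{\Sigma,\text{AWGN}}\le D_{\Sigma,\text{Truncated}}$. The role of the invertibility of $F(\S,\U)$ and $F(\U,\D)$ is that it guarantees the dither-cancellation at each layer is well defined and that each transmitted dither contributes linearly and distinguishably to the received signal, so no information is lost in the reduction. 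The main obstacle I anticipate is justifying the last step rigorously across the \emph{composition} of the two hops: because the relay operation is nonlinear and acts on a dithered input, I would need to check carefully that the joint distribution of the effective noises at the destinations is still of the form to which the network version of the Diggavi--Cover worst-case-noise inequality from \cite{wcnoisefull} applies, likely by conditioning on the common dither randomness and arguing layer-by-layer.
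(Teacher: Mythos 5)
First, a point of comparison: the paper does not actually prove this lemma --- it is imported verbatim as \cite[Corollary 2]{deterministickkk}, with only the remark that it rests on the worst-case-additive-noise characterization of \cite{wcnoisefull}. So your proposal is being measured against a citation rather than an in-paper argument. Your high-level strategy --- simulate the AWGN scheme on the truncated network, convert the quantization error into additive noise of bounded variance, and then invoke the worst-case-noise result --- is consistent with that remark and is the right shape of argument.

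However, the dithering step has a concrete gap. In the truncated model the floor is applied inside the channel, to the superposition $\sum_u F(u,v)X_u$, so the only dither you can place ``before the quantizer'' at receiver $v$ is the aggregate $D_v=\sum_u F(u,v)U_u$. The subtractive-dither identity produces an error that is uniform and independent of the signal only when $D_v \bmod 1$ is uniform on $[0,1)$ (Schuchman's condition; equivalently, the characteristic function of $D_v$ must vanish at all nonzero integer multiples of $2\pi$). For $U_u$ uniform on $[0,1)$ and generic real gains, the fractional part of $F(u,v)U_u$ is not uniform (e.g.\ $F=1/2$ yields an error supported on half the cell and correlated with the signal), and no choice of per-transmitter dither distributions can satisfy the condition simultaneously at all $K$ receivers, since a single $U_u$ is scaled by $K$ different, generically rationally independent, gains. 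Consequently the effective noise $N_{u_j}$ you construct is neither uniform nor independent of the codewords, and the subsequent appeal to the Diggavi--Cover-type inequality of \cite{wcnoisefull} --- which is a statement about additive noise independent of the transmit signals --- is not justified. This signal dependence of the quantization error is precisely the difficulty the cited result must overcome, so the gap is not cosmetic. (Separately, your account of why invertibility of $F(\S,\U)$ and $F(\U,\D)$ is needed is off: dither cancellation is well defined regardless; the invertibility is what the network worst-case-noise theorem and the two-hop reduction of \cite{deterministickkk} actually require.)
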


Because of Lemma~\ref{truncatedlem}, any upper bound 
for the sum DoF of a $\kkk$ network (with invertible transfer matrices) under the truncated model is also a bound for the sum DoF of the corresponding AWGN network.
Since the $\kkk$ network under the truncated channel model is a deterministic network, we can use \tref{genthm} to upper-bound $C_\Sigma$ and also $D_\Sigma$.
Moreover, as implied by \cite[Lemma 7.2]{ADT11}, the DoF of a MIMO channel under the truncated deterministic model are given by the rank of the channel matrix.
We obtain a version of Corollary~\ref{ffcor} for truncated deterministic networks:

\vspace{2mm}

\begin{correp}{\ref{ffcor}'}
For a layered $K$-unicast truncated deterministic network $\N$, we must have 
\aln{ 
& D_\Sigma \leq \sum_{j=1}^{r-1} \rank(\cut[j]; \cut^c[j+1])
+ \rank(\cutb[j]; \cutb^c[j+1]) \non 
& \hspace{20mm} - \rank( \cutb[j] ; \cut^c[j+1])
}
for any node subsets $\cut$ and $\cutb$ such that $\cutb \subset \cut \setminus \D$, $\S \subset \cut$ and 
$d_i \in \cut \Leftrightarrow s_i \in \cutb$. 
\end{correp}

\begin{proof}
This result follows using the same steps as in the proof of Corollary \ref{ffcor}, except that,
instead of Lemma \ref{ranklem}, we use Lemma \ref{truncranklem}, which is based on \cite[Lemma 7.2]{ADT11}.
\end{proof}

Since Corollary \ref{ffcor2} follows directly from Corollary \ref{ffcor}, we can also replace $\bar C_\Sigma$ with $D_\Sigma$ in Corollary \ref{ffcor2} and obtain necessary conditions for $K$ DoF to be achievable in a truncated deterministic $\kkk$ network. 
By Lemma \ref{truncatedlem}, these conditions are also necessary in the case of AWGN networks,
%
%
and interestingly, they turn out to also be sufficient.
We will say that two node sets $\A, \B \subset \V$ are matched if 
there is a perfect matching between $\A$ and $\B$ in $\E$.
Then we have:

\begin{theorem} \label{dofthm}
For a $\kkk$ AWGN network, if $F(\U,\D)$ and $F(\S,\U)$ are invertible and 
\begin{enumerate}[(i)]
\item $(s_i,u_j) \in \E \Longleftrightarrow \U \setminus \{u_j\}$ and $\D \setminus \{d_i\}$ are matched, 
\item $(u_j,d_i) \in \E \Longleftrightarrow \S \setminus \{s_i\}$ and $\U \setminus \{u_j\}$ are matched
\end{enumerate}
for any $i,j$, then, for almost all values of channel gains (of existing edges), $D_\Sigma = K$.
Otherwise, 
$D_\Sigma \leq K-1$ for almost all values of channel gains.
\end{theorem}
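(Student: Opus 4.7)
The plan is to handle the converse and achievability parts separately. The converse will combine the DoF analog of Corollary~\ref{ffcor2} (which is obtained by applying Corollary~\ref{ffcor}' in place of Corollary~\ref{ffcor} and substituting $D_\Sigma$ for $\bar C_\Sigma$) with Lemma~\ref{truncatedlem} and a standard genericity argument that translates the vanishing of a sub-determinant into the absence of a perfect matching. The achievability will follow by adapting the aligned interference neutralization scheme of \cite{dofkkk} from the fully connected case to our partially connected topology, with conditions (i)-(ii) ensuring that all full-rank conditions needed by the scheme hold for almost all channel gains.

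For the converse, I would begin from the DoF version of Corollary~\ref{ffcor2}, which states that in the truncated deterministic $\kkk$ network $D_\Sigma = K$ requires $F(\U,\D)$ and $F(\S,\U)$ to be invertible together with the two determinant conditions (i)-(ii) of that corollary; by Lemma~\ref{truncatedlem}, the same conditions are necessary in the AWGN network. To translate them into the matching conditions in the theorem I would invoke the Leibniz expansion: for any $\A, \B \subset \V$ with $|\A| = |\B|$, $\det F(\A,\B)$ is a polynomial in the channel gains whose monomials are signed over bijections $\pi : \A \to \B$, and a monomial survives only if every $(u,\pi(u))$ is an existing edge in $\E$. Hence $\det F(\A,\B)$ is identically zero precisely when $\A$ and $\B$ admit no perfect matching in $\E$, and otherwise it is nonzero outside a set of Lebesgue measure zero. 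Applying this equivalence to the sub-matrices appearing in Corollary~\ref{ffcor2} (and to $F(\U,\D)$, $F(\S,\U)$ themselves) converts conditions (i)-(ii) of that corollary into conditions (i)-(ii) of the theorem, so a violation of either one forces $D_\Sigma \le K-1$ for almost all channel gains.

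For the achievability, I would build on the scheme of \cite{dofkkk}, in which the sources code over a symbol extension, each relay transmits a linear combination of its receptions, and the relay coefficients are chosen so that at each destination $d_i$ all $K-1$ undesired streams neutralize and $W_i$ is received interference-free. In the fully connected case, the existence of such coefficients (together with a nonzero desired gain) reduces to the nondegeneracy of a finite family of polynomial expressions in the channel gains. Under partial connectivity each such polynomial is a determinant whose Leibniz expansion corresponds to matchings explicitly covered by conditions (i)-(ii); hence under our hypotheses none of these polynomials is identically zero, so for almost all existing-edge channel gains they are all nonzero and the scheme achieves $K$ DoF.

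The main obstacle I anticipate is on the achievability side, specifically in verifying that the family of invertibility hypotheses implicitly required by the scheme of \cite{dofkkk} is indeed controlled by the two matching conditions in the theorem and no others. This entails tracing each linear system solved by the scheme, identifying the sub-matrix whose invertibility it asserts, and checking via Leibniz expansion that its nondegeneracy is implied by one of (i)-(ii). Once this correspondence is set up, a single generic-channel-gains argument closes both directions uniformly, and the usual symbol-extension computation then turns the achievability of rate $K$ per extended channel use into $K$ DoF.
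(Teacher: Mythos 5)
Your converse is exactly the paper's: apply the DoF version of Corollary~\ref{ffcor2} to the truncated deterministic model, transfer to AWGN via Lemma~\ref{truncatedlem}, and convert the determinant conditions into matching conditions by the Leibniz expansion (generically, $\det F(\A,\B)\neq 0$ iff $\A$ and $\B$ are matched). That half is fine.

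The achievability half has a genuine gap, and it sits precisely where you flag your "main obstacle." Your description of the scheme of \cite{dofkkk} --- sources code over a symbol extension, each relay transmits a \emph{linear combination} of its receptions, coefficients chosen so that undesired streams neutralize at each destination --- misrepresents Aligned Network Diagonalization. In AND the sources transmit along monomial directions $T_{\vec m}=\prod_{(s_i,u_j)\in\E_1}F(s_i,u_j)^{m(s_i,u_j)}$, each relay \emph{decodes} the integer combination riding on each direction (a nonlinear step) and re-modulates it onto a new direction $\tilde T_{\vec m}=\prod_{(s_i,u_j)\in\E_1}B(s_i,u_j)^{m(s_i,u_j)}$ built from the entries of $B=F(\U,\D)^{-1}$. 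For this re-mapping to be well defined under partial connectivity, the scheme needs a \emph{structural} condition, not merely the nonvanishing of a family of determinants arising from linear systems: the zero pattern of $F(\U,\D)^{-1}$ must coincide with that of $F(\S,\U)$, and the zero pattern of $F(\S,\U)^{-1}$ with that of $F(\U,\D)$ (the paper's Definition~\ref{diagdef}, ``diagonalizability''). If some $B(s_i,u_j)=0$ while $(s_i,u_j)\in\E_1$, distinct receive directions collapse and the relay cannot separate its streams. The missing step is therefore: (a) isolate diagonalizability as the exact hypothesis under which AND extends (the paper's Lemma~\ref{andlem}), and (b) show conditions (i)--(ii) imply it, via the cofactor formula $[F(\S,\U)^{-1}]_{i,j}\propto\det F(\S\setminus\{s_i\},\U\setminus\{u_j\})$, which is generically nonzero iff $\S\setminus\{s_i\}$ and $\U\setminus\{u_j\}$ are matched, iff (by (ii)) $(u_j,d_i)\in\E$. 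Your plan of ``tracing each linear system solved by the scheme'' would not naturally surface this support-matching requirement, so as written the achievability argument does not close.
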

\iffalse

\begin{proof}[Proof of Converse]
If $\U \setminus \{u_j\}$ and $\D \setminus \{d_i\}$ are not matched, for any choice of channel gains, we will have $\det F(\U \setminus \{u_j\},\D \setminus \{d_i\}) = 0$.
Since $(s_i,u_j) \notin \E$ implies $F(s_i,u_j) = 0$, we see that, if condition ($i$) above is not satisfied, condition  ($i$) in Corollary~\ref{ffcor2} is not satisfied.
Similarly, if condition ($ii$) above is not satisfied, condition  ($ii$) in Corollary~\ref{ffcor2} is not satisfied.
\end{proof}

\else
The necessary part follows by the previous discussion and by noticing that, for almost all choices of channel gains, ($i$) and ($ii$) are equivalent to ($i$) and ($ii$) in Corollary \ref{ffcor2}.
\fi
In order to prove the achievability part, we first need the following definition and lemma.


\begin{definition} \label{diagdef}
A $\kkk$ network with edge set $\E$ is diagonalizable if, for almost all assignments of real-valued channel gains to edges in $\E$, 
\begin{itemize}
\item $F(\S,\U)^{-1}$ and $F(\U,\D)$ have zeros at the same entries,
\item $F(\U,\D)^{-1}$  and $F(\S,\U)$ have zeros at the same entries.
\end{itemize}
\end{definition}

Whereas the Aligned Network Diagonalization (AND) scheme was introduced in \cite{dofkkk} for the case of $\kkk$ networks with fully connected hops, it can be extended to the class of diagonalizable networks.
This implies the following lemma, which we prove in the appendix.
%

\begin{lemma} \label{andlem}
If a $\kkk$ AWGN network is diagonalizable, then for almost all values of the channel gains, $D_\Sigma = K$.
\end{lemma}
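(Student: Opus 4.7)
The plan is to extend the Aligned Network Diagonalization (AND) scheme of \cite{dofkkk}, which achieves $K$ DoF for the fully connected $\kkk$ AWGN network, to any $\kkk$ AWGN network satisfying the diagonalizability condition of Definition \ref{diagdef}. At a high level, AND attains $K$ DoF by combining real interference alignment at the sources with a matching operation at the destinations so that the end-to-end map effectively acts as $K$ parallel interference-free links, a diagonalization implicitly mediated by the inverses $F(\S,\U)^{-1}$ and $F(\U,\D)^{-1}$.

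First, I would revisit the AND construction of \cite{dofkkk} and isolate the precise points at which the channel matrices enter the scheme. In the fully connected case, every entry of $F(\S,\U)$, $F(\S,\U)^{-1}$, $F(\U,\D)$, and $F(\U,\D)^{-1}$ is almost surely nonzero, and this is what lets every step of the alignment go through generically: the source pre-coding weights and the destination post-processing weights needed by AND correspond to entries of these inverse matrices, and these can all be treated as generic real numbers.

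Second, I would verify that the diagonalizability hypothesis is exactly the structural condition needed to port AND to a partially connected network. By Definition \ref{diagdef}, the zero pattern of $F(\S,\U)^{-1}$ coincides with that of $F(\U,\D)$, and the zero pattern of $F(\U,\D)^{-1}$ coincides with that of $F(\S,\U)$. Consequently, every nonzero coefficient required by the AND scheme corresponds to a nonzero channel coefficient of an actual edge in $\E$, so the construction is well defined and no ``ghost'' edges are implicitly invoked. Provided the generic conditions on the channel gains still hold, this shows that AND delivers $K$ DoF on any diagonalizable network.

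The hard part will be verifying that the number-theoretic hypotheses underlying the real interference alignment step (algebraic independence of the alignment directions and the resulting Diophantine separability of lattice points, as in \cite{MotahariRealInterference}) still hold after restricting the channel gains to the algebraic subvariety cut out by the diagonalizability constraint. These hypotheses reduce to the non-vanishing of finitely many polynomials in the channel gains, and a priori some of them could vanish identically on the subvariety. I would resolve this by exhibiting one explicit channel realization on the diagonalizable subvariety for which all the polynomials in question are simultaneously nonzero; a Zariski-density argument then shows that their joint non-vanishing set is open and of full Lebesgue measure within the subvariety, giving $D_\Sigma = K$ for almost every diagonalizable channel realization, as claimed.
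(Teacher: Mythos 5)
Your proposal follows essentially the same route as the paper: extend the Aligned Network Diagonalization scheme of \cite{dofkkk} to partially connected $\kkk$ networks and argue that Definition~\ref{diagdef} is exactly the structural condition making the construction well defined. Two imprecisions are worth flagging. First, in AND the diagonalization is performed at the \emph{relays}, not at the destinations: each relay decodes the aligned integer combinations $q_{j,\vec m}$ arriving along monomial directions $T_{\vec m}$ (monomials in the first-hop gains) and re-transmits them along new directions $\tilde T_{\vec m}$, which are monomials in the entries of $B = F(\U,\D)^{-1}$; the second hop then multiplies by $B^{-1}$, delivering interference-free streams, and the destinations do nothing special. What diagonalizability buys is precisely that $B(s_i,u_j)\ne 0$ exactly for $(s_i,u_j)$ in the first-hop edge set, so the relay directions are indexed by the same set $\Delta_{N+1}$ as the received ones and the substitution $T_{\vec m}\mapsto\tilde T_{\vec m}$ is legitimate; processing confined to sources and destinations could not diagonalize a two-hop network. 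Since you explicitly defer to the construction in \cite{dofkkk}, this is a fixable misdescription rather than a fatal flaw, but it is the crux of the scheme. Second, your genericity argument is framed slightly off: diagonalizability is a property of the topology, not a constraint cutting out a proper subvariety --- the free gains range over all of $\R^{|\E|}$ (non-edges are absent coordinates, not constrained ones). The rational-independence/distinctness of the monomial directions is indeed a polynomial non-vanishing condition amenable to your one-good-point argument, but the constellation-separability step rests on Khintchine--Groshev-type full-measure statements that are not of polynomial non-vanishing type; the paper simply cites \cite{dofkkk} for both, and those arguments apply verbatim to the free coordinates.
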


This lemma allows us to complete the proof of \tref{dofthm}.

\begin{proof}[Proof of Achievability of \tref{dofthm}]
The $(i,j)$th entry of $F(\S,\U)^{-1}$ can be written as 
\aln{
[F(\S,\U)^{-1}]_{i,j} = \frac{\det (\S\setminus\{s_i\},\U\setminus\{u_j\})}{\det F(\S,\U)}.}
Therefore, $[F(\S,\U)^{-1}]_{i,j}$ is nonzero if and only if $\det (\S\setminus\{s_i\},\U\setminus\{u_j\})$ is nonzero.
The latter occurs for almost all values of channel gains if and only if $\S \setminus \{s_i\}$ and $\U \setminus \{u_j\}$ are matched, which by ($ii$) occurs if and only if $F(u_j,d_i) = [F(\U,\D)]_{i,j} \ne 0$.
Analogously we conclude that, for almost all values of channel gains, $[F(\U,\D)^{-1}]_{i,j}$ is nonzero if and only if $[F(\S,\U)]_{i,j}$ is nonzero.
Thus if a $\kkkl{0.5}$ AWGN network satisfies the conditions in \tref{dofthm}, it is diagonalizable and by Lemma~\ref{andlem}, $K$ DoF are achievable for almost all values of channel gains.
%
%
\end{proof}

\subsection{Two-Hop Networks with Adjacent-Cell Interference}

The bound from Corollary~\ref{ffcor}, when applied to the DoF of $\kkk$ AWGN networks, is also tight for the case of ``adjacent-cell interference''.  
As illustrated in \fref{adjfig},
\begin{figure}[ht] 
     \centering
       \includegraphics[trim=0 2mm 0 3mm, clip=false, width=0.5\linewidth]{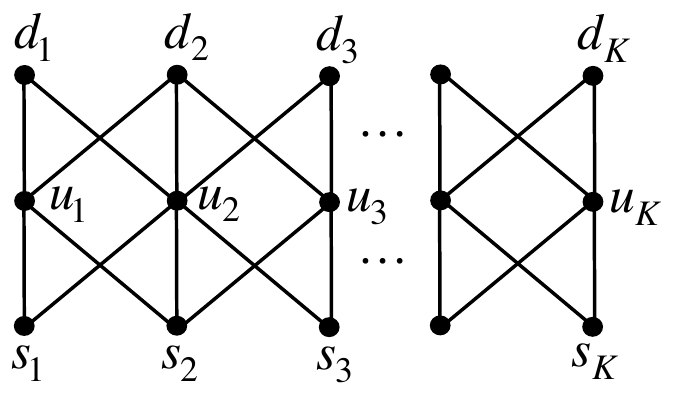} \caption{The $\kkk$ Wireless Network with adjacent-cell interference. \label{adjfig}}
\end{figure}
for this class of networks, $\E = \{ (s_i,u_j) : |i-j|\leq1\} \cup \{ (u_i,d_j) : |i-j|\leq1\}$.
This configuration is motivated in the literature as the result of two-hop communication within each cell, when interference only occurs between adjacent cells \cite{SimeoneMesh}.

\begin{theorem} \label{adjacentthm}
The AWGN $\kkkl{0.5}$ adjacent-cell interference network has $\left\lceil \frac{2K}{3} \right\rceil$ DoF for almost all values of channel gains.
\end{theorem}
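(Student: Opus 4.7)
The plan is to prove the DoF characterization through matching achievability and cut-set-based converse arguments, both working at the network-topology level.

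For achievability, I would partition the pair indices $\{1,\ldots,K\}$ into disjoint blocks of two consecutive indices separated by single-index buffers: set $B_k = \{3k+1,3k+2\}$ for $k = 0, 1, \ldots$, and silence the in-between triple $(s_{3k+3}, u_{3k+3}, d_{3k+3})$ between consecutive blocks. Because the adjacent-cell topology only admits links between indices differing by at most one, the silenced buffers completely decouple consecutive blocks, turning each block into an isolated, fully-connected $2 \times 2 \times 2$ AWGN sub-network between its two sources, two relays, and two destinations. Each such sub-network achieves $2$ DoF for almost all channel gains by \cite{dofkkk} (or equivalently by Theorem \ref{dofthm}, whose matching conditions are trivially satisfied by a fully connected $2 \times 2 \times 2$). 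When $K \equiv 1 \pmod 3$, the residual singleton pair contributes $1$ DoF via direct forwarding through its dedicated relay. Summing the block contributions gives exactly $\lceil 2K/3 \rceil$ DoF.

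For the converse, I would transfer to the truncated deterministic model via Lemma \ref{truncatedlem} and reduce the problem to the per-triple inequality $R_i + R_{i+1} + R_{i+2} \leq 2$ (in DoF) for any three consecutive indices. Starting from the truncated-deterministic version of Corollary \ref{ffcor} and specialising to the three-rate sub-sum (with the remaining $K-3$ pairs treated as regular nodes, as permitted by the remark after Theorem \ref{genthm}), I would select $\Omega$ and $\Theta$ in direct analogy with the choices in the proof of Corollary \ref{ffcor2}: $\Omega$ contains every transmitter together with two of the three active destinations, while $\Theta$ contains the source corresponding to the excluded destination together with all but one relay, the omitted relay chosen so that its link to the excluded source is absent. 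The rank arithmetic then reproduces the same bound of $2$ that drove $D_\Sigma \leq K-1 = 2$ in the standalone $K = 3$ case. Summing this per-triple bound over disjoint triples $\{1,2,3\}, \{4,5,6\}, \ldots$ and closing out any leftover pair by the trivial $R_j \leq 1$ yields $D_\Sigma \leq \lceil 2K/3 \rceil$.

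The main obstacle is verifying the per-triple bound when the triple is embedded in the larger $K$-pair network: the other $K-3$ pairs, once promoted to relay nodes, could in principle inflate the sub-sum by routing auxiliary information across the triple. The resolution relies on the localized structure of adjacency, under which outside nodes interact with the inside triple only at its two boundary positions. With $\Omega$ and $\Theta$ chosen so that the outside nodes remain on the same side of each cut, their contributions to the rank-based bound appear as canceling additive and subtractive terms, leaving intact the $2$-DoF ceiling established in isolation for the $3 \times 3 \times 3$ adjacent-cell network.
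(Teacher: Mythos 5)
Your achievability argument coincides with the paper's: disjoint fully-connected $2\times2\times2$ blocks on consecutive index pairs, separated by silenced buffer triples, each achieving $2$ DoF, plus a trivial relay chain for a leftover singleton. That part is fine.

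The converse, however, has a genuine gap. The paper does \emph{not} prove a per-triple bound; it applies Corollary~\ref{ffcor} \emph{once}, globally, with period-\emph{six} index sets $\A = \{1,2\}\cup\{5,6,7,8\}\cup\cdots$ and $\B=\{1\}\cup\{6,7\}\cup\cdots$, and the saving comes from one large matching between $u_\A$ and $d_{\B^c}$. Your local bound $R_i+R_{i+1}+R_{i+2}\leq 2$ does not follow from the cuts you describe once the triple sits in the interior of the network, precisely because of the boundary relays $u_{i-1}$ and $u_{i+3}$, which are connected to the triple on both hops and cannot be kept ``on the same side'' of the cuts. Concretely, take the triple $\{4,5,6\}$ in $K\geq 9$ and, say, retain $d_5$ in $\cut$ (so $s_5\in\cutb$). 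To make the first-hop term vanish you need $\cutb[2]\supseteq\{u_4,u_5,u_6\}$, all relays heard by $s_5$; but these three relays have a perfect matching to $\{d_4,d_5,d_6\}$, so $\rank(\cutb[2];\cutb^c[3])=3$ while $\rank(\U;\{d_4,d_6\})=\rank(\cutb[2];\{d_4,d_6\})=2$, and the bound evaluates to $2+3-2=3$, not $2$. Retaining a different destination (or none, or several), or tolerating a nonzero first-hop term with a smaller $\cutb[2]$, also yields $3$: the triple is served by five relays $u_3,\dots,u_7$, and every admissible choice leaks one extra rank. (At the two ends of the chain the fix is to shrink $\cutb[2]$ to the \emph{two} relays heard by the retained end source, which caps the rank at $2$; that option does not exist in the interior, where every source has three relay neighbors.) So the per-triple decomposition does not close as sketched --- you would need an independent proof that the interior per-triple bound even holds (which is unclear), or you must adopt the paper's single global, period-six cut. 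As a minor additional point, your description of $\cutb$ (``the source corresponding to the excluded destination'') violates the constraint $d_i\in\cut\Leftrightarrow s_i\in\cutb$; presumably you meant the retained destination.
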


\begin{proof}
For the achievability, we consider several $2 \times 2 \times 2$ subnetworks formed by $\{s_i,s_{i+1}, u_i, u_{i+1}, d_i, d_{i+1}\}$ for $i=1,4,7,...$.
In each one, we can use \cite{xx} to achieve $2$ DoF, leaving the remaining nodes as ``buffers'' to prevent any interference between differen $2 \times 2 \times 2$ channels.
If $K = 1+3m$ for some $m \in \IN$, we utilize $\{s_K,u_K,d_K\}$ as a linear network where $1$ DoF can be achieved.
It is not difficult to see that this scheme achieves $\left\lceil \frac{2K}{3} \right\rceil$ DoF.

%
%

For the converse, we use the bound from Corollary~\ref{ffcor}, with 
$
\cut  = \S \cup \U \cup d_\B \;\; \text{and} \;\; \cutb = s_\B \cup u_\A$, 
where
$\A = \left( \{1,2\} \cup \{5,6,7,8\} \cup \{11,12,13,14\}  ... \right) \cap \K$,
$\B  = \left(\{1\} \cup \{6,7\} \cup \{12,13\} \cup ... \right) \cap \K$
and $\K = \{1,...,K\}$.
First we notice that, since no index in $\B$ is adjacent to an index in $\A^c$, we have 
$\rank (s_\B;u_{\A^c})=0$.
Moreover, we have
\aln{
& \rank(\U; d_{\B^c}) + \rank(u_\A; \D) \leq |\A| + |\B^c| \non
& = \left| \left( \{1,2\} \cup \{5,6,7,8\} \cup \{11,12,13,14\} \cup ... \right) \cap \K \right| \\
& \; + \left| \left( \{2,3,4,5\} \cup \{8,9,10,11\} \cup \{14,15,16,17\} \cup ... \right) \cap \K \right| \\
& = K + \left| \{2,5,8,11,...\} \cap \K \right| = K + \left\lfloor (K+1)/3 \right\rfloor.
}
In order to compute $\rank (u_{\A},d_{\B^c})$, we notice that with the nodes of $u_{\A}$ and $d_{\B^c}$ we can build the matching
\aln{
\{ (1,2),\hspace{-0.5mm}(2,3),\hspace{-0.5mm}(5,4),\hspace{-0.5mm}(6,5),\hspace{-0.5mm}(7,8),\hspace{-0.5mm}(8,9),\hspace{-0.5mm}(11,10),... \} \cap \K\times \K,
}
which can be verified to have cardinality $\left\lceil 2(K-1)/3 \right\rceil$.
Since either all the nodes in $u_\A$ or all the nodes in $d_{\B^c}$ are in this matching, we conclude that $\rank (u_{\A},d_{\B^c}) = \left\lceil 2(K-1)/3 \right\rceil$ for almost all values of channel gains, and the bound in \eref{ffgen} reduces to
\aln{
K + \left\lfloor (K+1)/3\right\rfloor - \left\lceil 2(K-1)/3 \right\rceil = \left\lceil 2K/3 \right\rceil,
}
as we wanted to show.
\end{proof}


\iflong

\subsection{Alternative Interpretation of the GNS Bound} \label{gnssec}


A $K$-unicast wireline network $\N$ is characterized by a directed acyclic graph $G(\V,\E)$ where $\V$ is the node set and $\E$ the edge set. 
We let $\I(v) = \{ u : (u,v) \in \E\}$ and $\O(v) = \{u : (v,u) \in \E\}$ and $\Delta = \max_v \max(|\O(v)|,|\I(v)|)$.
At each time $t$, each $v \in \V$ transmits a vector $X_v[t] \in \F^{|\O(v)|}$, for some finite field $\F$, where each component is called 
$X_{v,u}[t]$ for some $u \in \O(v)$. 
Each $v \in \V$ receives a vector $Y_v[t] \in \F^{|\I(v)|}$, whose components  are $X_{u,v}[t]$ for $u \in \I(v)$.

For $K$-unicast wireline networks, we consider a special case of the bound in \tref{genthm} that can be seen to be equivalent to the Generalized Network Sharing bound \cite{networksharing,KamathGNS} and presents an alternative interpretation of this bound.

\begin{cor}[GNS Bound] \label{gnsthm}
Let $\N^\ell$ be the concatenation of $\ell$ copies of a $K$-unicast wireline network $\N$.
Suppose there is a set of edges $\M \subset \E$ such that, by removing $\M$ from each of the $\ell$ copies of $\N$ in $\N^\ell$, all sources and destinations in $\N^\ell$ are disconnected.
Then any rate tuple $(R_1,...,R_K) \cdot \log |\F|$ achievable on $\N$ must satisfy
\al{ \label{gns}
\sum_{i=1}^K R_i \leq |\M|.
}
\end{cor}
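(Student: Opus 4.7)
The plan is to instantiate Theorem~\ref{genthm} with a nested family of node cuts constructed from the connectivity of $\N^\ell$ after $\M$ has been removed from every copy. I first specialize the wireline form of the bound: since each $Y_v$ is the vector of incoming symbols $X_{u,v}$ with $u\in\I(v)$, the variables $X_{\cut_j^c}$ determine all $X_{u,v}$ with $u\in\cut_j^c$, and $Y_{\cut_{j-1}^c}$ already reveals those with $v\in\cut_{j-1}^c$. Consequently, each term on the right-hand side of \eref{cutsetgen} is at most $\log|\F|$ times the number of edges $(u,v)\in\E$ with $u\in\cut_j$ and $v\in\cut_{j-1}\setminus\cut_j$; moreover, the nesting $\cut_{j-1}\supseteq\cut_j$ implies that each $v\in\V$ belongs to $\cut_{j-1}\setminus\cut_j$ for at most one index $j$, so any given edge is counted at most once across the whole sum.

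To choose the cuts, I define a level function on $\V$. Let $\mathcal{R}$ be the set of nodes in $\N^\ell$ reachable from the sources of copy $1$ in the graph obtained by removing $\M$ from each of the $\ell$ copies; by hypothesis $d_i^{(\ell)}\notin\mathcal{R}$ for every $i$. For each $v\in\V$, put $L(v)=|\{j\in\{1,\dots,\ell\}:v^{(j)}\in\mathcal{R}\}|$ and $\cut_j=\{v\in\V:L(v)\geq j\}$. Using the identification $d_i^{(j)}=s_i^{(j+1)}$, together with $s_i^{(1)}\in\mathcal{R}$ and $d_i^{(\ell)}\notin\mathcal{R}$, a short index-counting argument gives $L(s_i)=L(d_i)+1$, which is exactly the condition $d_i\in\cut_j\Leftrightarrow s_i\in\cut_{j+1}$; the nesting $\V=\cut_0\supseteq\cut_1\supseteq\cdots\supseteq\cut_\ell\supseteq\cut_{\ell+1}=\emptyset$ is immediate from the definition of $\cut_j$ as a superlevel set of $L$.

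It remains to bound the resulting edge-count sum by $|\M|$. An edge $(u,v)\in\E$ is counted in $\sum_{j=1}^\ell|\{(u',v')\in\E:u'\in\cut_j,\,v'\in\cut_{j-1}\setminus\cut_j\}|$ precisely when $L(u)\geq L(v)+1$. But for any $(u,v)\in\E\setminus\M$ and any copy $j$, the edge $(u^{(j)},v^{(j)})$ survives in $\N^\ell$ after the removal of $\M$, so $u^{(j)}\in\mathcal{R}$ forces $v^{(j)}\in\mathcal{R}$; hence $\{j:u^{(j)}\in\mathcal{R}\}\subseteq\{j:v^{(j)}\in\mathcal{R}\}$ and $L(u)\leq L(v)$. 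It follows that every edge with $L(u)>L(v)$ lies in $\M$, yielding at most $|\M|$ counted edges and therefore $\sum_{i=1}^K R_i\leq|\M|$. The main subtlety in this plan is the verification $L(d_i)=L(s_i)-1$: this is the single step where the hypothesis $d_i^{(\ell)}\notin\mathcal{R}$ is used, and it is also where the gluing between consecutive copies of $\N$ in $\N^\ell$ plays its essential role.
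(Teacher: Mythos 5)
Your proposal is correct and follows essentially the same route as the paper: the cuts are the per-copy reachable sets of $\N^\ell$ after deleting the copies of $\M$ (your superlevel sets of $L$ coincide with these, since reachability is monotone across consecutive copies), each term of \eref{cutsetgen} is bounded by $\log|\F|$ times the number of edges crossing from $\cut_j$ into $\cut_{j-1}\setminus\cut_j$, and the disjointness of the sets $\cut_{j-1}\setminus\cut_j$ plus the fact that any edge with $L(u)>L(v)$ must lie in $\M$ gives the count $|\M|$. Your level-function packaging is a nice touch that makes the nesting $\cut_1\supseteq\cdots\supseteq\cut_\ell$ automatic and reduces the cut-validity check to the single identity $L(s_i)=L(d_i)+1$, but the underlying argument is the one in the paper.
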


\begin{proof}
Let $\cut$ be the set of nodes in $\N^\ell$ that are reachable from a source through a path that does not contain any edges in any of the copies of $\M$.
Now let $\cut_i$ be the nodes in $\cut$ that are in the $i$th copy of $\N$.
It is not difficult to check that $\cut_1,...,\cut_\ell$ satisfy the conditions of \tref{genthm}.
Now let $\delta(A,B) = \{(u,v) \in \E : u \in \A, v \in \B\}$.
We notice that if $v \in \cut_j^c \setminus \S$ for some $j$, for each $u \in \I(v)$ we must either have $u \in \cut_j^c$ or $(u,v) \in \M \cap \delta(\cut_j,\cut_j^c)$ (or else $v$ would be in $\cut_j$).
Hence, 
\aln{
H(Y_{\cut_j^c}| X_{\cut_j^c}, Y_{\cut_{j-1}^c}) & \leq H(Y_{\cut_j^c \cap \cut_{j-1}}| X_{\cut_j^c}) \non
& \hspace{-1cm} =H(X_{u,v} : (u,v) \in \M \cap \delta(\cut_j,\cut_j^c \cap \cut_{j-1}) ) 
}
Finally, since the sets $\cut_1^c$, $\cut_2^c \cap \cut_1$,..., $\cut_\ell^c \cap \cut_{\ell-1}$, are pairwise disjoint, \tref{genthm} implies that
\aln{
\sum_{i=1}^K R_i & \leq \sum_{j=1}^\ell H(X_{u,v} : (u,v) \in \M \cap \delta(\cut_j,\cut_j^c \cap \cut_{j-1}) ) 
}
%
%
%
%
%
%
\hspace{1.25cm}  $\leq |\M| \log |\F|.$ 
\end{proof}


It is easy to check that this bound is equivalent to the GNS bound as stated in \cite{KamathGNS2}.
Moreover, the conditions in Corollary~\ref{gnsthm} provide a new interpretation to the bound,  illustrated in \fref{gnsfig}.
\begin{figure}[h] 
     \centering
       \includegraphics[trim=0 0 0 4mm, clip=false,width=0.8\linewidth]{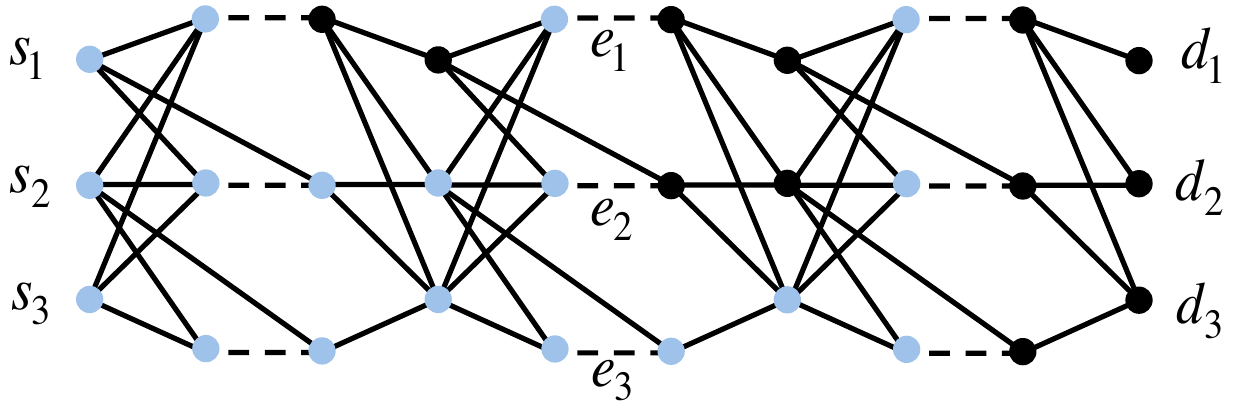} \caption{Illustration of the GNS bound for a $3$-unicast network $\N$. By removing edges $e_1$, $e_2$ and $e_3$ (dashed) from all three  copies of $\N$, we disconnect all sources and  destinations.
The nodes that can be reached from the sources after removing $e_1$, $e_2$ and $e_3$ (in blue) form $\cut_1$, $\cut_2$ and $\cut_3$.\label{gnsfig}}
\end{figure}

\subsection{Bounds for Linear Network Coding}

Since the bound in \tref{genthm} holds for general deterministic networks, if one restricts the kinds of relaying operations that can be used (say, to linear), these operations can be absorbed into the network.
In this section, we illustrate one such example, where \tref{genthm} can be used to obtain a bound that is tighter than the GNS bound.
Consider the wireline network in \fref{kamathfig}, first introduced in \cite{KamathGNS}.
\begin{figure}[ht] 
     \centering
       \includegraphics[width=0.7\linewidth]{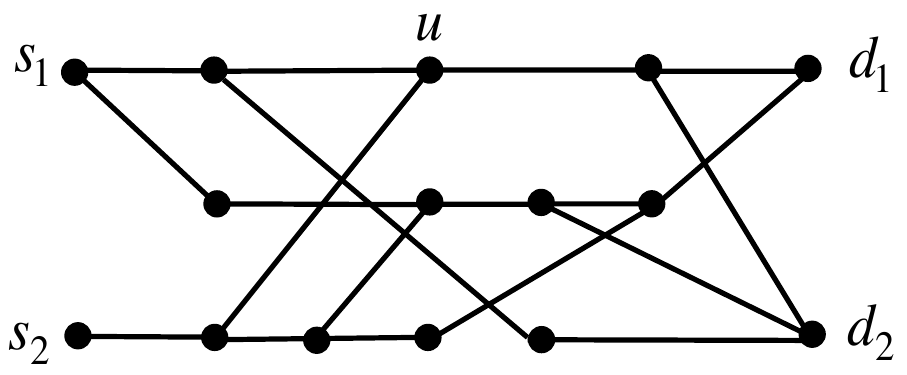} \caption{Two-unicast network where the GNS bound is not tight \cite{KamathGNS}\label{kamathfig}}
\end{figure}
With the purpose of finding an upper bound on $2 R_1 + R_2$, we consider applying the concept of network concatenation but this time in a different fashion.
We will concatenate the network in \fref{kamathfig} sideways, as shown in \fref{kamathcuts}.
It is not difficult to see that if $(R_1,R_2)$ is achieved in the network in \fref{kamathfig}, then we can achieve rate $(R_1,R_2,R_1)$ in this new network.
Moreover, the fact that \tref{genthm} can be applied to general deterministic networks  allows us to consider a mixed network model where the nodes in the junction must ``broadcast'' the same signal into both copies of the network.
Moreover, we can remove the dashed edges, since $d_2$ should be able to decode its message just using the signals from the first copy of the network.
\begin{figure}[ht] 
     \centering
       \includegraphics[width=0.7\linewidth]{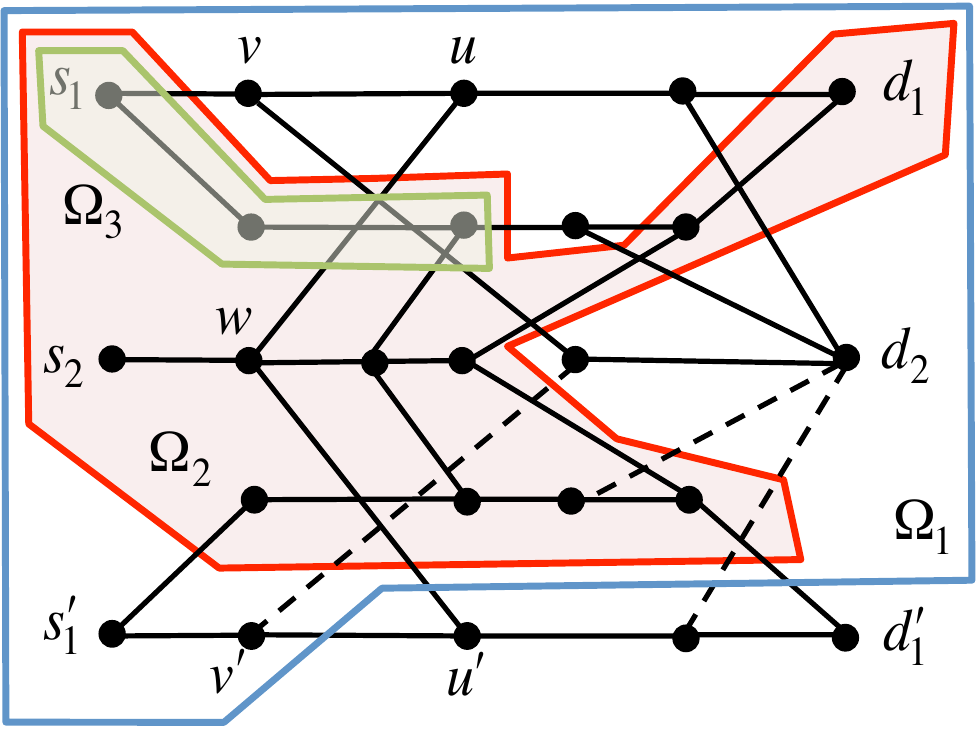} \caption{Cut choices to obtain a bound on $2R_1 + R_2$\label{kamathcuts}}
\end{figure}
Now applying \tref{genthm} with $\cut_1, \cut_2$ and $\cut_3$ as shown in \fref{kamathcuts}, we have
\aln{
2 R_1 + R_2 & \leq 2 + [2 + H(Y_{u} | X_v, Y_{u'}, X_{v'})] + 0.
}
Finally we notice that, if we restrict ourselves to linear network coding, we can absorb the operation performed at $u$ into the network, and have $Y_u$ be the result of this operation.
In this case, since $X_{w,u} = X_{w,u'}$, we will have $H(Y_{u} | X_v, Y_{u'}, X_{v'}) = 0$, which implies $2R_1 + R_2 \leq 4$.
By noticing that $R_2 \leq 1$, this implies $R_1+R_2 \leq 2.5$, which is achievable by linear network coding, as shown in \cite{KamathGNS}.

\fi

\iflong

\appendix

\begin{claimrep}{\ref{claim1}}
Let $C_\N$ and $C_{\N^\ell}$ be the capacity regions of a $K$-unicast memoryless network $\N$ and of the concatenation of $\ell$ copies of $\N$.
Then $C_\N \subseteq C_{\N^\ell}$.
\end{claimrep}
\begin{proof}[Proof]
We prove the case $\ell = 2$.
The general case follows similarly.
We show that if $(R_1,...,R_K)$ is achievable in $\N$, then $(R_1(1-\delta),...,R_K(1-\delta))$ is achievable in $\N^2$ for any $\delta > 0$.
Consider a coding scheme $\C_n$ for $\N$ with rate tuple $(R_1,...,R_K)$ and error probability $P_{\rm error}(\C_n) =\ep_n$.
For an arbitrary $\delta > 0$, we construct a new coding scheme with rate tuple $(R_1(1-\delta),...,R_K(1-\delta))$ and block length $n L$ for the concatenated network $\N^2$, where we let $L = \lfloor \ep^{-1/2}_n \rfloor$, as follows.
Each source $s_i$ will view its message $W_i \in \{1,...,2^{nL R_i(1-\delta)}\}$ as $L(1-\delta)$ messages $W_i^{(1)},...,W_i^{(L(1-\delta))}$ in $\{1,..., 2^{nR}\}$.
In the $j$th block of length $n$, the sources and relays in the first copy of $\N$ behave as if they were simply using coding scheme $\C_n$ with messages $W_1^{(j)},...,W_K^{(j)}$, and the nodes in $\U = \{u_1,...,u_K\}$ behave as destinations, outputting $\hat W_1^{(j)},...,\hat W_K^{(j)}$ at the end of the block.
In the $(j+1)$th block, the nodes in $\U$ operate as sources for the second copy of $\N$, re-encoding the decoded messages from the previous block $\hat W_1^{(j)},...,\hat W_K^{(j)}$, and all the remaining nodes in the second copy of $\N$ simply operate according to coding scheme $\C_n$.
Provided that $\ep_n$ is small enough, $L^{-1} < \delta$, and at the end of the $[L(1-\delta) + 1]$th block, each destination $d_i$ obtains an estimate for all $L(1-\delta)$ messages from $s_i$.
By the union bound, the error probability of this code over $\N^2$ is at most $2 L(1-\delta) \ep_n \leq 2 \ep_n^{1/2}$, which tends to zero as $\ep_n \to 0$.
\end{proof}

\begin{lemma} \label{ranklem}
If $\bf x$ is a $d$-dimension random vector with entries in a finite field $\F$, then
\aln{
H(A {\bf x} | B{\bf x}) \leq \left( \rank \begin{bmatrix} A \\ B \end{bmatrix}  - \rank B \right) \log|\F|.
}
\end{lemma}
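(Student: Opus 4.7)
The plan is to bound the conditional entropy by counting the number of possible values $A\mathbf{x}$ can take once $B\mathbf{x}$ is fixed. Write $r_1 = \rank B$ and $r_2 = \rank \begin{bmatrix} A \\ B \end{bmatrix}$; the goal becomes showing that conditioned on any realization $B\mathbf{x} = y$, the vector $A\mathbf{x}$ is supported on at most $|\F|^{r_2-r_1}$ elements, since this immediately yields $H(A\mathbf{x}\mid B\mathbf{x}=y) \leq (r_2-r_1)\log|\F|$ and hence the claimed bound after averaging over $y$.

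First I would fix an arbitrary $y$ in the image of $B$ and pick any $x_0 \in \F^d$ with $Bx_0 = y$. The preimage $\{x \in \F^d : Bx = y\}$ is then the affine subspace $x_0 + \ker B$, so the set of values $A\mathbf{x}$ can take conditional on $B\mathbf{x} = y$ is contained in $Ax_0 + A(\ker B)$. Its cardinality is therefore exactly $|\F|^{\dim A(\ker B)}$, and the task reduces to computing $\dim A(\ker B)$.

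For this, I would apply rank--nullity to the restriction of $A$ to $\ker B$: $\dim A(\ker B) = \dim \ker B - \dim(\ker A \cap \ker B)$. Now $\dim \ker B = d - r_1$, and since $\ker A \cap \ker B$ is precisely the kernel of the stacked matrix, $\dim(\ker A \cap \ker B) = d - r_2$. Subtracting gives $\dim A(\ker B) = r_2 - r_1$, as desired.

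The argument has no real obstacle; the only thing to be careful about is that the uniform upper bound on $H(A\mathbf{x}\mid B\mathbf{x}=y)$ holds for every $y$ in the image of $B$, so averaging over the distribution of $B\mathbf{x}$ preserves the inequality. I would conclude by writing
\[
H(A\mathbf{x}\mid B\mathbf{x}) = \sum_{y} \Pr[B\mathbf{x}=y]\, H(A\mathbf{x}\mid B\mathbf{x}=y) \leq (r_2-r_1)\log|\F|.
\]
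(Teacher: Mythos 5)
Your proof is correct, and it takes a different route from the paper's. You bound the conditional entropy pointwise: for each realization $B\mathbf{x}=y$, the support of $A\mathbf{x}$ is a coset of $A(\ker B)$, whose dimension you compute via rank--nullity as $\rank\begin{bmatrix} A \\ B \end{bmatrix} - \rank B$, and then you invoke the uniform bound $H \leq \log(\text{support size})$ and average over $y$. The paper instead stays entirely at the level of entropy identities: it replaces $B$ by a full-row-rank submatrix $B'$ (so $H(B\mathbf{x})=H(B'\mathbf{x})$), prunes rows of $A$ to get $A'$ with $\begin{bmatrix} A' \\ B' \end{bmatrix}$ of full row rank, uses the chain rule to write $H(A\mathbf{x}\mid B'\mathbf{x}) = H(A'\mathbf{x},B'\mathbf{x})-H(B'\mathbf{x}) \leq H(A'\mathbf{x}) \leq \rank(A')\log|\F|$, and finally observes $\rank A' = \rank\begin{bmatrix} A \\ B \end{bmatrix} - \rank B$. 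The two arguments encode the same linear-algebraic fact (your $\dim A(\ker B)$ equals the paper's $\rank A'$), but yours yields the slightly stronger statement that the bound holds conditionally on every value $y$, while the paper's style of chain-rule manipulation is the one reused in its truncated-deterministic analogue (Lemma \ref{truncranklem}), where a support-counting argument would not carry over directly. No gaps; the only point you rightly flag --- that the per-$y$ bound is uniform so averaging preserves it --- is handled correctly.
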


\begin{proof}
Let $B'$ be a $(\rank B) \times d$ matrix made up of $\rank B$ linearly independent rows of $B$.
Clearly, $H(B {\bf x}) = H(B'{\bf x})$.
Let $A'$ be a matrix obtained by removing rows of $A$ until $\rank \begin{bmatrix} A' \\ B' \end{bmatrix}$ is full rank.
We then have
%
\aln{
H(A{\bf x}|B{\bf x})&\leq H(A{\bf x}|B'{\bf x})  =H(A{\bf x},B'{\bf x})-H(B'{\bf x})\\
&=H\left( \begin{bmatrix} A\\B'\end{bmatrix}{\bf x}\right)-H(B'{\bf x}) \non
&=H\left( \begin{bmatrix} A'\\B'\end{bmatrix}{\bf x}\right)-H(B'x) \non
%
&=H(A'{\bf x},B'{\bf x})-H(B'x) \non
& \leq H(A'{\bf x}) \leq \rank(A') \log |\F|.
}
Moreover, we have that
\aln{
& \rank \begin{bmatrix} A \\ B \end{bmatrix}  - \rank B  = \rank \begin{bmatrix} A' \\ B' \end{bmatrix}  - \rank B' = \rank A',
}
concluding the proof.
\end{proof}

\vspace{2mm}

\begin{lemma} \label{truncranklem}
For a vector $\bf y$, let $\lfloor \bf y \rfloor$ be obtained by applying the floor function to each component of $\bf y$.
If $\bf x$ is a $d$-dimension zero-mean continuous random vector with $E[x_i^2]\leq P$, then
\aln{
H\left( \left\lfloor A {\bf x} \right \rfloor |\left\lfloor B{\bf x} \right \rfloor \right) \leq \left( \rank \begin{bmatrix} A \\ B \end{bmatrix}  - \rank B \right) \tfrac12 \log P + c,
}
where $c=o(\log P)$.
\end{lemma}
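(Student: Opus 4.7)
The plan is to parallel the proof of Lemma~\ref{ranklem} from the finite-field setting, with the twist that ``$B{\bf x}$ is a bijective function of $B'{\bf x}$'' (which held over a finite field) must be replaced by ``$\lfloor B{\bf x}\rfloor$ and $\lfloor B'{\bf x}\rfloor$ determine each other up to $O(1)$ bits of entropy,'' since integer rounding is not preserved under arbitrary real linear combinations. The single-matrix ingredient I rely on is $H(\lfloor M{\bf x}\rfloor) \le \rank(M)\cdot \tfrac12\log P + o(\log P)$, which is essentially \cite[Lemma 7.2]{ADT11}: it follows by restricting to a maximal linearly independent subset of rows of $M$ and using that, for a scalar random variable $Y$ with $E[Y^2] = O(P)$, $H(\lfloor Y\rfloor) \le \tfrac12\log P + O(1)$.

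First I would let $r_B = \rank B$, $r = \rank \begin{bmatrix} A \\ B \end{bmatrix}$, pick a submatrix $B'$ consisting of $r_B$ linearly independent rows of $B$, and then choose $r - r_B$ rows of $A$, forming $A'$, so that $\begin{bmatrix} A' \\ B' \end{bmatrix}$ has rank $r$. Let $A''$ denote the remaining rows of $A$; by construction there exist fixed real matrices $\alpha,\beta$ (whose entries do not depend on $P$) with $A'' = \alpha A' + \beta B'$.

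The heart of the argument is to show $H(\lfloor A''{\bf x}\rfloor \mid \lfloor A'{\bf x}\rfloor, \lfloor B'{\bf x}\rfloor) = O(1)$. Writing $y = \lfloor y\rfloor + \{y\}$ componentwise with $\{y\}\in[0,1)$, I would use
\[
A''{\bf x} \;=\; \alpha\lfloor A'{\bf x}\rfloor + \beta \lfloor B'{\bf x}\rfloor + \alpha\{A'{\bf x}\} + \beta\{B'{\bf x}\},
\]
so that, conditioned on $\lfloor A'{\bf x}\rfloor$ and $\lfloor B'{\bf x}\rfloor$, the real vector $A''{\bf x}$ sits within a constant $c_0$ (depending only on $\alpha,\beta$) of the known quantity $\alpha\lfloor A'{\bf x}\rfloor + \beta\lfloor B'{\bf x}\rfloor$. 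Hence $\lfloor A''{\bf x}\rfloor$ ranges over at most $(2c_0+1)^{\text{rows}(A'')}$ integer vectors, yielding the $O(1)$ entropy bound. Since $B'$ is a subset of rows of $B$, $\lfloor B'{\bf x}\rfloor$ is a coordinate projection of $\lfloor B{\bf x}\rfloor$, and the chain rule then gives
\[
H(\lfloor A{\bf x}\rfloor \mid \lfloor B{\bf x}\rfloor) \;\leq\; H(\lfloor A'{\bf x}\rfloor \mid \lfloor B{\bf x}\rfloor) + O(1) \;\leq\; H(\lfloor A'{\bf x}\rfloor) + O(1).
\]
Applying the single-matrix bound to $A'$, which has $r - r_B$ rows and full row rank, yields the claimed inequality.

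The main obstacle I anticipate is bookkeeping rather than conceptual: I need to verify that every constant hidden in the $O(1)$ and $o(\log P)$ terms depends only on the entries of $A$, $B$ (and hence on $\alpha,\beta$) and on the moment bound $P$ on the components of ${\bf x}$, but not otherwise on $P$; and to handle carefully the non-additivity of $\lfloor u + v\rfloor$ when decomposing real vectors into integer and fractional parts.
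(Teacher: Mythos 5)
Your proof is correct, and it takes a genuinely different route from the paper's. Both proofs start from the same linear-algebraic decomposition (a maximal independent subset $B'$ of the rows of $B$, an extension $A'$ from the rows of $A$, and the remaining rows written as fixed linear combinations of $\begin{bmatrix} A' \\ B' \end{bmatrix}$), but they diverge in how the conditional entropy is then controlled. The paper rewrites $H(\lfloor A{\bf x}\rfloor \mid \lfloor B'{\bf x}\rfloor)$ as a difference of mutual informations $I({\bf x};\cdot)$, invokes Lemma~7.2 of \cite{ADT11} to pass from floor-quantized observations to Gaussian-noise observations up to additive constants, and then finishes with a Markov-chain manipulation and the MIMO degrees-of-freedom bound $I({\bf x};A'{\bf x}+{\bf z}_{A'})\le(\rank A')\tfrac12\log P+o(\log P)$. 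You instead stay entirely at the level of discrete entropies: the observation that $A''{\bf x}$ lies within a bounded (in $P$) distance of $\alpha\lfloor A'{\bf x}\rfloor+\beta\lfloor B'{\bf x}\rfloor$ gives $H(\lfloor A''{\bf x}\rfloor\mid\lfloor A'{\bf x}\rfloor,\lfloor B'{\bf x}\rfloor)=O(1)$ directly, and the single-matrix bound reduces by subadditivity to the standard maximum-entropy estimate $H(\lfloor Y\rfloor)\le\tfrac12\log P+O(1)$ for a scalar with $E[Y^2]=O(P)$ (which you should state as the discretized-Gaussian max-entropy fact rather than attributing it to \cite[Lemma 7.2]{ADT11}, which is really the quantization-vs-Gaussian-noise comparison used by the paper). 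Your argument is more elementary and self-contained, avoids the Gaussian-noise detour altogether, and in fact delivers an $O(1)$ error term rather than merely $o(\log P)$; the paper's route has the advantage of connecting the bound explicitly to the DoF of the corresponding Gaussian MIMO channel, which is the interpretation used elsewhere in the paper. The bookkeeping you flag is benign: all hidden constants depend only on the entries of $A$ and $B$ (through $\alpha$, $\beta$, and the row norms) and on the dimensions, not on $P$.
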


\begin{proof}
Following the proof of Lemma~\ref{ranklem}, we let $B'$ be a $(\rank B) \times d$ matrix made up of $\rank B$ linearly independent rows of $B$ and
$A'$ be a matrix obtained by removing rows of $A$ until $\rank \begin{bmatrix} A' \\ B' \end{bmatrix}$ is full rank.
Furthermore, we let $\tilde A$ be the matrix containing the $t$ rows removed from $A$ to obtain $A'$.
Notice that there exists a matrix $M$ such that $\tilde A = M \begin{bmatrix} A' \\ B' \end{bmatrix}$.
We then have
%
\al{
& \hspace{-1.6mm} H(\floor{A{\bf x}}| \floor{B{\bf x}})  \leq H(\floor{A{\bf x}}| \floor{B'{\bf x}}) \non
&=H\left( \floor{\begin{bmatrix} A\\B'\end{bmatrix}{\bf x}}\right)-H( \floor{B'{\bf x}}) \non
& =  I\left( {\bf x}; \floor{\begin{bmatrix} A\\B'\end{bmatrix}{\bf x}}\right)-I({\bf x}; \floor{B'{\bf x}}).  \label{truncranklem1}
}
Now if we let ${\bf z}_{A'}$, ${\bf z}_{B'}$ and ${\bf z}_{\tilde A}$ be independent random vectors of dimensions $\rank A'$, $\rank B'$ and $t$ respectively with i.i.d. $\N(0,1)$ entries.
Then, from Lemma~7.2 in \cite{ADT11}, we can upper-bound \eref{truncranklem1} by
\aln{
& I\left( {\bf x}; \begin{bmatrix} \tilde A \\ A' \\B'\end{bmatrix}{\bf x} + \begin{bmatrix} {\bf z}_{\tilde A} \\ {\bf z}_{A'} \\ {\bf z}_{B'}\end{bmatrix} \right)-I({\bf x}; B'{\bf x} + {\bf z}_{B'}) + c_1 \non
& = I\left( {\bf x}; A'{\bf x} + {\bf z}_{A'} |B'{\bf x} + {\bf z}_{B'}\right) \non
& \quad +I\left( {\bf x}; \tilde A {\bf x} + {\bf z}_{\tilde A} \left| \begin{bmatrix}  A' \\B'\end{bmatrix}{\bf x} + \begin{bmatrix}  {\bf z}_{A'} \\ {\bf z}_{B'}\end{bmatrix} \right. \right)
+ c_1 \non
& \leqnum I\left( {\bf x}; A'{\bf x} + {\bf z}_{A'} \right) \non
& \quad + I\left( {\bf x}; {\bf z}_{\tilde A} - M \begin{bmatrix}  {\bf z}_{A'} \\ {\bf z}_{B'}\end{bmatrix}  \left| \begin{bmatrix}  A' \\B'\end{bmatrix}{\bf x} + \begin{bmatrix}  {\bf z}_{A'} \\ {\bf z}_{B'}\end{bmatrix} \right. \right)
+ c_1 \non
& \leq I\left( {\bf x}; A'{\bf x} + {\bf z}_{A'} \right) \non
& \quad + h\left( {\bf z}_{\tilde A} - M \begin{bmatrix}  {\bf z}_{A'} \\ {\bf z}_{B'}\end{bmatrix} \right) - h\left( {\bf z}_{\tilde A} 
  \left| \begin{bmatrix}  A' \\B'\end{bmatrix}{\bf x} + \begin{bmatrix}  {\bf z}_{A'} \\ {\bf z}_{B'}\end{bmatrix}, {\bf x} \right. \right)
+ c_1 \non
& = I\left( {\bf x}; A'{\bf x} + {\bf z}_{A'} \right) + h\left( {\bf z}_{\tilde A} - M \begin{bmatrix}  {\bf z}_{A'} \\ {\bf z}_{B'}\end{bmatrix} \right) - h\left( {\bf z}_{\tilde A} \right)
+ c_1 \non
& \leq I\left( {\bf x}; A'{\bf x} + {\bf z}_{A'} \right) + c_1 + c_2,
} \rescnt
where \cnt follows from $\MC{A'{\bf x} + {\bf z}_{A'}}{\bf x}{B'{\bf x} + {\bf z}_{B'}}$ and $c_1$ and $c_2$ are scalars independent of $P$.
Since a MIMO channel with transfer matrix $A'$ has $\rank A'$ degrees of freedom, we have that \aln{ 
 I\left( {\bf x}; A'{\bf x} + {\bf z}_{A'} \right)  \leq  \left( \rank A' \right) \tfrac12 \log P + o(\log P).
}
Moreover, from the proof of Lemma \ref{ranklem}, we know that
$\rank A' = \rank \begin{bmatrix} A \\ B \end{bmatrix}  - \rank B$, which concludes the proof.
%
\end{proof}

\vspace{4mm}

\begin{lemma} \label{submatrixlem}
Let $A$ be an $n\times n$ invertible matrix.
If $A'$ is an $(n-1)\times(n-1)$ submatrix obtained by removing the $i$th row and $j$th column of $A$ for some $i$ and $j$, then $\rank A' \geq n-2$.
\end{lemma}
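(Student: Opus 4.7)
The plan is to show the rank drops by at most one at each of the two removal steps. I first remove the $i$th row to get an $(n-1) \times n$ matrix $A_1$, and then remove the $j$th column of $A_1$ to get the $(n-1) \times (n-1)$ matrix $A'$. Applying the general fact that deleting a single row (or column) from any matrix decreases its rank by at most one, we get $\rank A_1 \geq \rank A - 1 = n-1$ and then $\rank A' \geq \rank A_1 - 1 \geq n-2$.

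To justify each rank-drop step concretely: since $A$ is invertible, its $n$ rows are linearly independent, so any $n-1$ of them, in particular the rows forming $A_1$, are linearly independent; hence $\rank A_1 = n-1$. For the column-deletion step, I would use the column-rank viewpoint: $A_1$ has $n$ columns with column rank $n-1$, so among its columns there exist $n-1$ linearly independent ones. After deleting the $j$th column, the remaining $n-1$ columns of $A'$ contain at least $n-2$ linearly independent columns (we lose at most the one deleted), so $\rank A' \geq n-2$.

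There is no real obstacle here; the result is essentially the standard observation that row/column deletion is a rank-1 perturbation of the column/row span. The statement is tight: removing the $i$th row and $j$th column of $A$ produces the cofactor submatrix whose determinant is (up to sign) the $(j,i)$ entry of $\det(A) \cdot A^{-1}$, which can vanish, giving $\rank A' = n-2$, while generically $\rank A' = n-1$. This matches exactly the dichotomy used in the proof of Corollary~\ref{ffcor2}.
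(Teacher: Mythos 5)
Your proof is correct, and it takes a genuinely different (and more elementary) route than the paper's. You rely on the standard fact that deleting one row or one column changes the rank by at most one: the $n-1$ surviving rows of the invertible matrix $A$ are linearly independent, so the intermediate $(n-1)\times n$ matrix $A_1$ has rank $n-1$, and removing one column from $A_1$ can remove at most one dimension from its column space, whence $\rank A' \geq n-2$. The paper instead argues by contradiction via the cofactor expansion of $\det A$ along the $i$th row: if $\rank A' < n-2$, then each $(i,k)$ cofactor with $k \neq j$ is the determinant of an $(n-1)\times(n-1)$ matrix obtained from $A'$ by replacing one column, so its rank is at most $\rank A' + 1 < n-1$ and the cofactor vanishes, while the $(i,j)$ cofactor is $\det A' = 0$ itself; the expansion then gives $\det A = 0$, contradicting invertibility. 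Your argument buys simplicity and immediate generality (deleting $p$ rows and $q$ columns costs at most $p+q$ in rank), whereas the paper's stays within the determinant/cofactor language that it reuses in the proofs of Corollaries~\ref{ffcor2} and~\ref{ffcor3}. Both are complete; your justification of the column-deletion step (at most one of the $n-1$ independent columns is lost) is sound.
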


\vspace{2mm}

\begin{proof}
Suppose by contradiction that $\rank A' < n-2$.
Consider the cofactor expansion of the determinant of $A$ along the $i$th row.
For each element $(i,k)$, for $k \ne j$, the $(i,k)$th cofactor of $A$ corresponds to the determinant of a matrix $A''$, obtained by replacing one of the columns of $A'$ with the $j$th column of $A$ without the $i$th entry.
Since $\rank A' < n-2$, $\rank A'' \leq n-1$ and $\det A'' = 0$.
Moreover, the $(i,j)$th cofactor of $A$ is simply $\det A' = 0$.
But this implies that $\det A = 0$, which is a contradiction.
\end{proof}

\vspace{4mm}

\begin{lemmarep}{\ref{andlem}}
If a $\kkk$ AWGN network is diagonalizable, then for almost all values of the channel gains, $D_\Sigma = K$.
\end{lemmarep}

\vspace{2mm}

\begin{proof}
The achievability scheme used to achieve $K$ sum degrees of freedom is nearly identical to the Aligned Network Diagonalization scheme from \cite{dofkkk} in the case of constant channel gains.
We will point out the main differences and refer the reader to \cite{dofkkk} for the technical details.

Each source $s_i$ starts by breaking its message $W_{s_i}$ into $L$ submessages.
Each of the submessages will be encoded in a separate data stream, using a single codebook with codewords of length $n$  
and only integer symbols.
Now, let $\E_1$ and $\E_2$ be the edges from the first and second hops respectively.
Then we define $\Delta_N = \{0,...,N-1\}^{|\E_1|}$ and
\al{
T_{\vec m} = \prod_{(s_i,u_j) \in \E_1} F(s_i,u_j)^{m(s_i,u_j)},
}
for some $\vec m = ( m(e) : e \in \E_1 ) \in \IN^{|\E_1|}$, and the set of transmit directions for the first hop will be given by
%
\al{ \label{transdirections}
\T_N = \left\{  T_{\vec m} : \vec m \in \Delta_N \right\},
}
for some arbitrary $N$.
Notice that the number of transmit directions (which is also the number of data streams) is $L = |\T_N| = |\Delta_N| = N^{|\E_1|}$.
We will let $c_{i,\vec m}[1]$, $c_{i,\vec m}[2],...,c_{i,\vec m}[n]$ be the $n$ symbols of the codeword associated to the submessage to be sent by source $s_i$ over the transmit direction indexed by $\vec m$.
At time $t \in \{1,...,n\}$, source $s_i$ will thus transmit 
\aln{
X_{s_i}[t] = \ge \sum_{\vec m \in \Delta_N}  T_{\vec m} \; c_{i,\vec m} [t] 
} 
where $\ge$ is chosen to satisfy the power constraint.
%
%
%

The received signal at relay $u_j$ can be written as
\al{ \label{recnotaligned}
Y_{u_j}[t] & = \ge \sum_{\vec m \in \Delta_{N}}  T_{\vec m} \left( \sum_{i=1}^K F_{s_i,u_j} c_{i,\vec m}[t] \right) + Z_{u_j}[t] \non
& = \ge \sum_{\vec m \in \Delta_{N+1}}  T_{\vec m} \; q_{j,\vec m} [t] + Z_{u_j}[t],
}
where $q_{j,\vec m}[t] = \sum_{i=1}^K c_{i,\vec m_{ij}}[t]$ and we define 
$m_{ij}(s_k,u_\ell) = m(s_k,u_\ell)$ if $(s_k,u_\ell) \ne (s_i,u_j)$, $m_{ij}(s_i,u_j) = m(s_i,u_j) -1$ and
$c_{i,\vec m}[t] = 0$ if any component of $\vec m$ is $-1$ or $N$.
As explained in \cite{dofkkk}, for almost all values of the channel gains, relay $u_j$ can decode each integer $q_{j,\vec m}$ with high probability.
These integers will be re-encoded by $u_j$ using new transmit directions.
To describe the new set of transmit directions, we first define
\al{
B(\S,\U) = \begin{bmatrix} B(s_1,u_1) &  ... & B(s_K,u_1) \\ 
\vdots & \ddots & \vdots \\ 
B(s_1,u_K) & ... & B(s_K,u_K) \end{bmatrix} = F(\U,\D)^{-1}.
}
Since we are considering a diagonalizable $\kkk$ network according to Definition \ref{diagdef}, for almost all values of the channel gains, $B(s_i,u_j) \ne 0$ if and only if $(s_i,u_j) \in \E_1$.
%
Thus, we may let
\al{ \label{directionsfixed2}
\til T_{\vec m} = \prod_{ (s_i,u_j) \in \E_1 } B(s_i,u_j)^{m(s_i,u_j)},
}
and, similar to (\ref{transdirections}), we can define the set of transmit directions for the relays to be
\aln{
\til \T_{N+1} = \left\{  \til T_{\vec m} : \vec m \in \Delta_{N+1} \right\}.
}
Relay $u_j$ will re-encode the $q_{j,\vec m}$\,s by essentially replacing each received direction $T_{\vec m}$ in (\ref{recnotaligned}) with the direction $\til T_{\vec m}$.
We highlight that this is only possible under the assumption of a diagonalizable $\kkk$ network.
The transmit signal of relay $u_j$ at time $t+1$ will be given by
\al{ \label{transaligned} 
X_{u_j}[t+1] & = \ge' \sum_{\vec m \in \Delta_{N+1}}  \til T_{\vec m} \; q_{j,\vec m} [t] \non
& = \ge' \sum_{\vec m \in \Delta_{N}}  \til T_{\vec m} \left( \sum_{i=1}^K B(s_i,u_j) \, c_{i,\vec m} [t] \right)
}
where $\ge'$ is chosen so that the output power constraint is satisfied.
%
%
%
%
%
%

In order to compute the received signals at the destinations, we first notice that, from (\ref{transaligned}), the vector of the $K$ relay transmit signals at time $t+1$ can be written as
\al{ 
& \ge' \sum_{\vec m \in \Delta_{N}}  \til T_{\vec m} 
\begin{bmatrix} B(s_1,u_1) &  ... & B(s_K,u_1) \\ 
\vdots & \ddots & \vdots \\ 
B(s_1,u_K) & ... & B(s_K,u_K) \end{bmatrix}
\begin{bmatrix} c_{1,\vec m}[t] \\ \vdots \\ c_{K,\vec m}[t] \end{bmatrix}. \label{relaytransmitvector}
}

Since the $\tilde T_{\vec s}$\,s are just scalars, we can write the vector of the $K$ received signals at the destinations as
\aln{
& \begin{bmatrix} Y_{d_1}[t+1]\\ \vdots \\ Y_{d_K}[t+1] \end{bmatrix} 
= F(\U,\D)
\begin{bmatrix} X_{u_1}[t+1] \\ \vdots \\ X_{u_K}[t+1] \end{bmatrix} + \begin{bmatrix} Z_{d_1}[t+1] \\ \vdots \\ Z_{d_K}[t+1] \end{bmatrix} }
\aln{
& = B(\S,\U)^{-1}
\begin{bmatrix} X_{u_1}[t+1]  \\ \vdots \\ X_{u_K}[t+1] \end{bmatrix} + \begin{bmatrix} Z_{d_1}[t+1] \\ \vdots \\ Z_{d_K}[t+1] \end{bmatrix} \\
&= \ge' \sum_{\vec s \in \Delta_{N}}  \til T_{\vec s} 
\begin{bmatrix} c_{1,\vec m}[t] \\ \vdots \\ c_{K,\vec m}[t] \end{bmatrix}+ \begin{bmatrix} Z_{d_1}[t+1] \\ \vdots \\ Z_{d_K}[t+1] \end{bmatrix}.
}
Thus, the received signal at destination $d_j$ at time $t+1$ is simply given by
\al{ \label{destreceivedscalar}
Y_{d_j}[t+1] = \ge' \sum_{\vec m \in \Delta_{N}}  \til T_{\vec m} \; c_{j,\vec m}[t] + Z_{d_j}[t+1],
}
and we see that all the interference has been cancelled, and destination $d_j$ receives only the data streams originated at source $s_j$.
Following the arguments in \cite{dofkkk}, it can be shown that such a scheme can indeed achieve $K$ DoF.
\end{proof}

\fi

\section*{Acknowledgements}

The research of A. S. Avestimehr and Ilan Shomorony is supported by a 2013 Qualcomm Innovation Fellowship, NSF Grants CAREER 1408639, CCF-1408755, NETS-1419632, EARS-1411244, ONR award N000141310094.

{\footnotesize
\bibliographystyle{IEEEtran}
\bibliography{refsl}
}

\end{document}